\newcommand\overplus[1]{\accentset{+}{#1}}
\newcommand\overminus[1]{\accentset{-}{#1}}
\newcommand\overcirc[1]{\accentset{\circ}{#1}}
\newcommand\undercirc[1]{\underaccent{\circ}{#1}}
\newtheorem{theorem}{Theorem}
\newtheorem{definition}{Definition}
\newtheorem{lemma}[theorem]{Lemma}
\newtheorem{corollary}[theorem]{Corollary}
\newtheorem{property}{Property}
\DeclareFontFamily{OT1}{pzc}{}
\DeclareFontShape{OT1}{pzc}{m}{it}%
  {<-> s * [1.1] pzcmi7t}{}
\DeclareMathAlphabet{\mathpzc}{OT1}{pzc}%
                     {m}{it}
\def\I{\mathcal{I}}
\def\J{\mathcal{J}}
\def\L{\mathcal{L}}
\newcommand{\MSL}{\textit{MinSumL}}
\newcommand{\MML}{\textit{MinMaxL}}
\title{Load Optimization with User Association in 
Cooperative and Load-Coupled LTE Networks}
\author{Lei You}
\author{Di Yuan}
\affil[1]{{\small Department of Science and Technology, Link{\"o}ping
University, Sweden}}
\affil[ ]{\texttt{{\small \{lei.you; di.yuan@it.uu.se\}}}\/}
\begin{document}

\maketitle

\begin{abstract} 
We extend the problem of optimizing user association for load balancing in
cellular networks along two dimensions. First, we consider joint transmission
(JT), which is one of the coordinated multipoint (CoMP) techniques, with which a
user may be simultaneously served by multiple base stations.  Second, we account
for, mathematically, the coupling relation between the base stations' load
levels that are dependent on each other due to
inter-cell interference.  We formulate two optimization problems, sum load
minimization (\MSL{}) and maximum load minimization (\MML{}). We prove that both
\MSL{} and \MML{} are $\mathcal{NP}$-hard.
We propose a mixed integer linear programming (MILP) based scheme by means of linearization.
This approach also leads to a bounding scheme for performance benchmarking. 
Then, we derive a set of partial optimality
conditions. Fulfillment of the conditions will guarantee performance
improvement for both \MSL{} and \MML{}. A solution algorithm is then
derived based on the conditions. Simulation results are provided to
demonstrate the effectiveness of the approaches.
\end{abstract}

{\em {\footnotesize Keywords}:\/} {\footnotesize joint transmission, load
balancing, load-coupling, heterogeneous
networks, resource efficiency.}

\section{Introduction}

\subsection{Background}

To provide high rate and capacity in the fifth generation (5G) communication
systems, efficient utilization of time-frequency resource is crucial. The
efficiency of time-frequency resource units (RUs) is influenced by many aspects,
including radio access, radio resource management, network planning,
interference coordination/mitigation, and user association strategies.
Inappropriate user association may result in inefficient usage of RUs.  In
particular, the user association policy is crucial in terms of balancing the
resource usage among cells and avoid overloading, especially in heterogeneous
networks (HetNets) scenarios due to the large difference between the transmit
power levels of macro cells (MCs) and small cells (SCs).

In this paper, we extend the optimization problem of user association along two
dimensions. The first is the consideration of coordinated multipoint (CoMP)
transmission \cite{Porcello:2014wq}. As one of the CoMP techniques, Joint
Transmission (JT) allows a user equipment (UE) to simultaneously receive its data from multiple
base stations (BSs) on the same set of time-frequency resource. Second, we
account for mathematically the load-coupling relation among the cells. Here, the
load refers to the amount of utilized RUs, and this
concept has been very useful for performance characterization
\cite{Siomina:2009bp,Cavalcante:2014jd2,IViering:2009tq,Fehske:2013gn,Siomina:eq,
Fehske:2012iw,Cavalcante:2014jd,You:2015wva,Siomina:2013ew,Siomina:2014be}.
The load in one cell governs the amount of transmission
and consequently inter-cell interference.  Thus, even for given association
of the UEs, the cell load levels cannot be treated 
independently from each
other.  Rather, they are inherently coupled, and constitute the so
called \textit{load-coupling system} that mathematically formulates the
mutual influence. To our knowledge, this type of characterization has not been
developed for JT\@.

\subsection{Related work}

User association optimization subject to quality of service (QoS) constraints
has been studied in~\cite{Athanasiou:2015hb, Sun:2015kn, Anonymous:M4eIBsQg,
Singh:2014bs, Li:2013ho, Hong:2012ky} for non-JT scenarios. In
\cite{Athanasiou:2015hb}, the authors applied Lagrangian duality for 
min-max fairness. Under the same performance
objective, the authors of~\cite{Sun:2015kn} studied BS association and power
allocation. In~\cite{Anonymous:M4eIBsQg},
optimal user association for ultra-dense networks has been approached by means
of integer linear programming.  User association with the perspective of random
HetNets, where access points are distributed according to a stationary point
process, has been studied in~\cite{Singh:2014bs}. It is shown that, if the
association is based on received power, the amount of association to SCs
decreases with the path loss exponent and increases with the channel gain
variance. In~\cite{Li:2013ho}, intracell cooperation using relays for resource
allocation optimization is investigated.  By using the KKT conditions, the
problem is decomposed into several, independent sub-problems.  User association
and resource allocation with the presence of possible misreporting of
information from individual users have been addressed by game theory in
\cite{Hong:2012ky}.

Radio resource allocation for CoMP has been studied in~\cite{Ye:2013ha,
Nigam:2014cd, Lakshmana:2014eu}. In~\cite{Ye:2013ha}, the authors considered
optimizing user association for load balancing, in which the load refers 
to the number of users associated to the cell, and thus load balancing
implicitly assumes that all users are uniform in resource consumption.
The study in~\cite{Nigam:2014cd} applied stochastic geometry for analyzing
network coverage with cooperative transmission.
Resource allocation in non-coherent JT-CoMP
scenarios has been studied in~\cite{Lakshmana:2014eu}, by optimizing a
fractional frequency reuse scheme.

\subsection{Our Work}

There is a lack of investigation that accounts for the
load-coupling relation with JT in the context of user association.  The
load-coupling model, in which the load represents the amount of time-frequency
resource consumption, rather than counting the number of users, enables an
analytical characterization of the user-specific resource requirement, and
the dynamic dependence relation of resource consumption among cells due to
mutual interference
\cite{Siomina:eq,Fehske:2012iw,Cavalcante:2014jd,Siomina:2014be}.

The load-coupling model consists in a group of non-linear equations.  Accounting
for JT adds to the complexity of load-coupling analysis.
We study optimizing
the cell-UE association with JT and cell-load-coupling. We consider two
performance metrics. The first is the total amount of time-frequency resource
consumption, i.e., sum of cell load levels. The second metric is the
minimization of maximum load, for the effect of load balancing among cells.

Our main contributions are as follows. We extend and generalize the
load-coupling model proposed in~\cite{Siomina:2009bp} to 
JT, and formulate both the sum load minimization (\MSL{}) and the
maximum load minimization (\MML{}) with the new model. Both \MSL{} and
\MML{} are proved to be $\mathcal{NP}$-hard. We first provide a solution approach
based on a linearization of the non-linear load coupling
constraint. The two problems are then approximately represented by two MILPs.
The approach also yields a bounding scheme
for performance benchmarking.
In our second approach, we derive a set of partial
optimality conditions; fulfillment of the conditions will guarantee performance
improvement for both \MSL{} and \MML{}. These conditions provide a theoretical
basis for deriving a solution algorithm referred to as \textit{Minimization of Load}
(\textit{MinL}), for optimizing the association between cells and UEs for
\MSL{} and \MML{}. We present extensive results to demonstrate the
effectiveness of the solution approaches as well as the benefit of JT in load
optimization.

\section{System Model and Load Coupling} 
\label{sec:system_model}

\subsection{Notations} 

We now generalize the model in  
\cite{Siomina:2009bp,Cavalcante:2014jd2,IViering:2009tq,Fehske:2013gn,Siomina:eq,
Fehske:2012iw,Cavalcante:2014jd,Siomina:2013ew,Siomina:2014be} to JT.
Denote the set of cells by $\mathcal{I}$ and the set of UEs by $\mathcal{J}$.
Let $n=|\mathcal{I}|$ and $m=|\mathcal{J}|$. Denote by ${\mathcal{I}^{+}_j}$ the
set of candidate serving cells of UE $j$.  For any UE $j$, denote by
$\mathcal{J}^{+}_i$ the set of UEs that potentially can be served by cell $i$.
Apparently, $i\in\mathcal{I}^{+}_j\Leftrightarrow
j\in\mathcal{J}^{+}_i$. Each UE $j$ is assigned with a home cell $c_j$, which is
chosen as the cell with the best received power at UE $j$. The union of all UEs'
home cells is denoted by $\mathcal{C}$. Let $\mathcal{J}^{-}_i=\{j:c_j=i\}$,
i.e., the set of UEs of which the corresponding home cell is $i$. 

We allow JT in our model, so any UE $j$ can be served by multiple
cells simultaneously. For the sake of presenting the system model,
consider a (generic) cell-UE association solution, for which the set
of cells serving UE $j$ and the set of UEs served by cell $i$ are
denoted by $\mathcal{I}_j$ and $\mathcal{J}_i$, respectively.  Note
that $i\in\mathcal{I}_j\Leftrightarrow j\in\mathcal{J}_i$. For the
sake of presentation, we selectively use either according to the
context.  Note that we have
$\mathcal{I}_j\subseteq{\mathcal{I}^{+}_j}$ and
$\mathcal{J}_i\subseteq{\mathcal{J}^{+}_i}$ for any $j\in\mathcal{J}$
and $i\in\mathcal{I}$, respectively. For each UE $j$, we use the
notation $d_j$ to denote its bit rate demand. 
For any $j\in\mathcal{J}$, we assume
$c_j\in\mathcal{I}_j$. Thus $\mathcal{J}^{-}_i\subseteq\mathcal{J}_i$
holds for any $i\in\mathcal{I}$.

\subsection{Load Coupling} 

For this moment, we fix the cell-UE association. Without loss of
generality, we use RU as the minimum unit for resource allocation, composed by
one or more resource blocks in orthogonal frequency division multiple access
(OFDMA). 

\vspace{-3mm}
\begin{equation} 
\gamma_{j}\triangleq{h}_j(\bm{x})= \frac{%
\sum_{i\in \mathcal{I}_j}p_{i}g_{ij} }{%
\sum_{k\in
\mathcal{I} \backslash \mathcal{I}_j}p_{k}g_{kj}x_k+\sigma^2 
} 
\label{eq:sinr} 
\end{equation}

\vspace{-3mm}
\begin{equation} 
\bm{\gamma}\triangleq[\gamma_1,\gamma_2,\ldots,\gamma_m] 
\label{eq:network_sinr} 
\end{equation}

Eq.~(\ref{eq:sinr}) models the SINR at UE $j\in\mathcal{J}$. The network-wide
SINRs are represented by the vector shown in Eq.~(\ref{eq:network_sinr}). In
Eq.~(\ref{eq:sinr}), $p_i$ ($p_i>0$) denotes the transmit power per RU (in time
and frequency) of cell $i$. Notation $g_{ij}$ is the power gain between cell $i$
and UE $j$. The summation $\sum_{i\in\mathcal{I}_j}p_{i}g_{ij}$ is the received
signal power at UE $j$ via JT, from all the UE $j$'s serving cells in
$\mathcal{I}_j$. In the denominator, $\sigma^2$ is the noise power. Entity $x_k$
is the load of cell $k$, which is defined to be the proportion of RUs consumed
in cell $k$ for all UEs in $\mathcal{J}_k$. We model the interference that UE
$j$ receives from other cells by the term $\sum_{k\in \mathcal{I} \backslash
\mathcal{I}_j}p_{k}g_{kj}x_k$. For any RU in cell $i$, $x_k$ is
intuitively interpreted as the likelihood that the served UEs of cell
$i$ receive the interference from the cell $k$.  The SINR of any UE
$j$ is a function of the load vector $\bm{x}$, denoted by
$h_j(\bm{x})$.

\vspace{-3mm}
\begin{equation} 
x_i\triangleq{f}_i(\bm{\gamma})=\sum_{j\in\mathcal{J}_i}\frac{d_j}{MB\log
_2\left(1+\gamma_j\right)}
\label{eq:load} 
\end{equation}

\vspace{-3mm}
\begin{equation} 
\bm{x}\triangleq[x_1,x_2,\ldots,x_n] 
\label{eq:network_load} 
\end{equation}

The load of any cell $i$, is represented in Eq.~(\ref{eq:load}), so as to satisfy
the bit rate demands of its served UEs. Eq.~(\ref{eq:network_load}) gives the
network-wide cell load.  In the denominator, $B$ is the bandwidth per RU and $M$
is the total number of RUs available. The entity $B\log_2(1+\gamma_j)$ is the
achievable bit rate per RU\@. Thus, $MB\log_2(1+\gamma_j)$ computes the total
achievable bit rate for UE $j$.  The term $d_j/MB\log_2(1+\gamma_j)$ computes
the required amount of RUs to satisfy the demand $d_j$. Then
$x_i=\sum_{j\in\mathcal{J}_i}d_j/MB\log_2(1+\gamma_j)$ is the proportion of the
RUs consumed for transmission in cell $i$, which complies to the definition of
the cell load. It can be verified in Eq.~(\ref{eq:load}), that if any UE $j$
satisfies $j\in\mathcal{J}_i\cap\mathcal{J}_k~(k\neq i)$ (meaning that $j$ is
currently served by cell $i$ and $k$ via JT), then the number of RUs
consumed by UE $j$ in both cells $i$ and $k$ are equal. The load of cell $i$ is
a function of the SINRs $\gamma_j$ for all $j\in\mathcal{J}_i$, denoted by
$f_i(\bm{\gamma})$ in Eq.~(\ref{eq:load}).

We have the coupling equations for both UE's SINR and cell's load, shown in
Eq.~(\ref{eq:load_coupling}). 

\vspace{-3mm}
\begin{equation} 
\textnormal{load-coupling: }\left\{
\begin{array}{l} \bm{\gamma}=\bm{h}(\bm{x}) \\
\bm{x}=\bm{f}(\bm{\gamma}) 
\end{array}\right.  
\label{eq:load_coupling} 
\end{equation}

Lemma~\ref{lma:gamma_x} shows that a larger $\bm{\gamma}$ leads to a smaller
$\bm{x}$, and vice versa. This can be verified easily in Eq.~(\ref{eq:sinr}) and
Eq.~(\ref{eq:load}).

\begin{lemma} The following relationships hold for the coupling
equations~\eqref{eq:load_coupling}.
\begin{enumerate} 
\item $\bm{{f}}(\bm{\gamma})\leq\bm{{f}}(\bm{\gamma}')$, for
$\bm{\gamma}\geq\bm{\gamma}'$.
\item $\bm{{h}}(\bm{x})\leq\bm{{h}}(\bm{x}')$, for $\bm{x}\geq\bm{x}'$.  
\end{enumerate}
\label{lma:gamma_x} 
\end{lemma}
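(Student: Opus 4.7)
The plan is to verify both monotonicity statements directly from the explicit formulas~\eqref{eq:sinr} and~\eqref{eq:load}; no fixed-point machinery is needed, since at this stage $\mathcal{I}_j$ (and hence $\mathcal{J}_i$) is held fixed and both mappings $\bm{f}$ and $\bm{h}$ are given in closed form.

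For part~(1), I would argue componentwise on $\bm{f}$. Fix any cell $i\in\mathcal{I}$ and look at
\[
f_i(\bm{\gamma}) \;=\; \sum_{j\in\mathcal{J}_i}\frac{d_j}{MB\log_2(1+\gamma_j)}.
\]
Each $d_j$ is nonnegative and $M,B>0$ are constants, while $\gamma_j\mapsto \log_2(1+\gamma_j)$ is strictly increasing on $\gamma_j\geq 0$. Hence every summand is nonincreasing in $\gamma_j$ and does not depend on the other components of $\bm{\gamma}$. Consequently, $\bm{\gamma}\geq\bm{\gamma}'$ (componentwise) implies $f_i(\bm{\gamma})\leq f_i(\bm{\gamma}')$ for every $i$, which is the first claim.

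For part~(2), I would proceed similarly with $\bm{h}$. For any UE $j\in\mathcal{J}$,
\[
h_j(\bm{x}) \;=\; \frac{\sum_{i\in\mathcal{I}_j}p_i g_{ij}}{\sum_{k\in\mathcal{I}\setminus\mathcal{I}_j}p_k g_{kj}x_k+\sigma^2}.
\]
The numerator is independent of $\bm{x}$. In the denominator, the coefficients $p_k g_{kj}$ are nonnegative (with $p_k>0$ and $g_{kj}\geq 0$) and $\sigma^2>0$, so the denominator is a nondecreasing, strictly positive affine function of $\bm{x}$ with respect to the coordinates $k\in\mathcal{I}\setminus\mathcal{I}_j$; the coordinates in $\mathcal{I}_j$ do not appear at all. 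Therefore $\bm{x}\geq\bm{x}'$ makes the denominator no smaller while leaving the numerator unchanged, so $h_j(\bm{x})\leq h_j(\bm{x}')$, proving the second claim.

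There is no real obstacle: the statement is a direct sign/monotonicity check on the two formulas. The only thing worth flagging in the write-up is that in $h_j$ the sum over interferers excludes the serving set $\mathcal{I}_j$, so the inequality $\bm{x}\geq\bm{x}'$ only needs to be used on the indices that actually appear; this does not weaken the conclusion, and the fact that $p_k>0$ together with $g_{kj}\geq 0$ and $\sigma^2>0$ keeps the denominator strictly positive so that no degenerate division occurs.
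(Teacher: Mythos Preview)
Your proposal is correct and matches the paper's approach: the paper does not write out a formal proof but simply states that the lemma ``can be verified easily in Eq.~(\ref{eq:sinr}) and Eq.~(\ref{eq:load}),'' which is exactly the direct componentwise monotonicity check you carry out. Your write-up supplies the details the paper omits, and nothing further is needed.
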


We remark that the way of modeling interference above is not exact. 
For two base stations serving a UE with JT,
the interference generated to another UE should contain
an additional term\footnote{For cells $i_1$ and $i_2$ serving jointly
a UE, the extra term for the interference to UE $j$  
equals $2\sqrt{p_{i_1}} \sqrt{p_{i_2}} h_{i_1 j} h_{i_2, j} 
\cos(\theta_{i_1} - \theta_{i_2})$, 
where $h_{i_1 j}$ and $h_{i_2, j}$ are the channel gains, and
$\theta_{i_1}$ and $\theta_{i_2}$ are the phases of the two received
interfering signals, see \cite{BaGi15}.}. One reason of using the
approximation is to achieve a good trade-off between exactness and
complexity, as an exact interference modeling would require the
network to acquire and process, for each UE, information related to
other individual UEs (rather than information available at the cell
level). Moreover, the extra terms incurred due to multiple JT
operations tend to offset each other (for a theoretical proof, see
\cite{BaGi15}). In addition, 
the approximation is coherent with the load-coupling modeling approach.
The approach itself is an approximation of interference (even for the non-JT
case), but it suits well (see
\cite{Siomina:2009bp,Cavalcante:2014jd2,IViering:2009tq,Fehske:2013gn,Siomina:eq,
Fehske:2012iw,Cavalcante:2014jd,Siomina:2013ew,Siomina:2014be}) as long as the
performance of interest is at an
aggregated level with a time scale being greater than that of an RU,
which is the case in our study. For these reasons, our way of modeling
interference has been used by a number of other authors for JT (e.g.,
\cite{Nigam:2014cd,TaSiAnJo14,BaGi15}).

\subsection{Standard Interference Function}

We show that the load-coupling equations can be solved by
the fixed-point iteration, based on the fact that both
$\bm{{h}}(\bm{{f}}(\bm{\gamma}))$ and $\bm{{f}}(\bm{{h}}(\bm{x}))$ are standard
interference function (SIF).

\vspace{-2mm}
\begin{definition} 
A function $\bm{\vartheta}$: $\mathbb{R}^m_+\rightarrow\mathbb{R}_{++}$ is
called an SIF if the
following properties hold: 
\begin{enumerate} 
\item (Scalability)
$\alpha\bm{\vartheta}(\bm{\mu})>\bm{\vartheta}(\alpha\bm{\mu}),~
\bm{\mu}\in \mathbb{R}^m_+,~\alpha>1$.
\item (Monotonicity) $\bm{\vartheta}(\bm{\mu})\geq \bm{\vartheta}(\bm{\mu}')$,
if $\bm{\mu} \geq \bm{\mu}'$.
\end{enumerate} 
\label{def:SIF} 
\end{definition}

\begin{property} 
Suppose the function $\bm{\vartheta}$ is an SIF\@. For the sequence
$\bm{\mu}^{(0)},\bm{\mu}^{(1)},\ldots$ generated by fixed-point iterations, if
there exists $k$ satisfying
$\bm{\vartheta}(\bm{\mu}^{(k)}) \leq\bm{\vartheta}(\bm{\mu}^{(k-1)})$, then the
sequence
$\bm{\mu}^{(k)},\bm{\mu}^{(k+1)},\ldots$ is monotonously decreasing (in every
component), and converges to a unique
fixed point.  
\end{property}

\begin{property} 
Suppose $\bm{A}\in\mathbb{R}^{n\times m}$, $\bm{\mu}\in\mathbb{R}^{m}$ and
$\bm{b}\in\mathbb{R}^{n}$. Define
$\varphi(\bm{\mu}):\mathbb{R}^n_+\rightarrow\mathbb{R}$. If $\varphi_1$ is
concave in $\bm{\mu}$, so is $\varphi (\bm{A}\bm{\mu}+\bm{b})$.  
\end{property}

\begin{property} 
Denote $\varphi_1:\mathbb{R}^k\rightarrow\mathbb{R}$. Denote
$\varphi_2:\mathbb{R}^n\rightarrow\mathbb{R}^k$. Then
the function $\varphi_1(\varphi_2(\cdot))$ is concave, if $\varphi_1$ is
concave and nondecreasing, and $\varphi_2$
is concave.  
\end{property}

\begin{lemma} 
The two functions \!$\bm{f}(\bm{h}(\bm{x}))$ and $\bm{h}(\bm{f}(\bm{\gamma}))$
are SIF\@.
\label{lma:sif} 
\end{lemma}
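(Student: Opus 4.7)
The plan is to verify the two axioms of Definition~\ref{def:SIF} for each composition directly. Monotonicity follows immediately from Lemma~\ref{lma:gamma_x}: for $\bm{x}\geq\bm{x}'$, two applications of the lemma yield $\bm{h}(\bm{x})\leq\bm{h}(\bm{x}')$ and then $\bm{f}(\bm{h}(\bm{x}))\geq\bm{f}(\bm{h}(\bm{x}'))$, because the composition of two nonincreasing maps is nondecreasing; the same chain handles $\bm{h}(\bm{f}(\cdot))$. So scalability is the substantive part.

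For scalability of $\bm{f}\circ\bm{h}$, I would sandwich $\bm{f}(\bm{h}(\alpha\bm{x}))$ between $\bm{f}(\bm{h}(\bm{x})/\alpha)$ and $\alpha\bm{f}(\bm{h}(\bm{x}))$ using two componentwise scaling bounds valid for $\alpha>1$. The first is $h_j(\alpha\bm{x})>h_j(\bm{x})/\alpha$: writing $S_j=\sum_{i\in\mathcal{I}_j}p_ig_{ij}$ and $I_j(\bm{x})=\sum_{k\in\mathcal{I}\setminus\mathcal{I}_j}p_kg_{kj}x_k$, the two sides equal $S_j/(\alpha I_j(\bm{x})+\sigma^2)$ and $S_j/(\alpha I_j(\bm{x})+\alpha\sigma^2)$ respectively, so the strict inequality reduces to $\alpha\sigma^2>\sigma^2$; the positivity of the noise $\sigma^2$ is essential here. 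The second is $f_i(\bm{\gamma}/\alpha)<\alpha f_i(\bm{\gamma})$: term by term this amounts to $\log_2(1+\gamma_j/\alpha)>(1/\alpha)\log_2(1+\gamma_j)$, equivalently $(1+\gamma_j/\alpha)^\alpha>1+\gamma_j$, which is Bernoulli's inequality $(1+t)^\alpha>1+\alpha t$ applied with $t=\gamma_j/\alpha>0$ and exponent $\alpha>1$. Combining, the first bound together with monotonicity of $\bm{f}$ from Lemma~\ref{lma:gamma_x} gives $\bm{f}(\bm{h}(\alpha\bm{x}))\leq\bm{f}(\bm{h}(\bm{x})/\alpha)$, and the second bound evaluated at $\bm{\gamma}=\bm{h}(\bm{x})$ gives $\bm{f}(\bm{h}(\bm{x})/\alpha)<\alpha\bm{f}(\bm{h}(\bm{x}))$, so chaining them yields the required strict inequality.

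For $\bm{h}\circ\bm{f}$ I would use the symmetric pair of bounds: $f_i(\alpha\bm{\gamma})>f_i(\bm{\gamma})/\alpha$, from Bernoulli in the form $\log_2(1+\alpha\gamma_j)<\alpha\log_2(1+\gamma_j)$, and $h_j(\bm{x}/\alpha)<\alpha h_j(\bm{x})$, from the same algebraic manipulation as before (again relying on $\sigma^2>0$). Chaining through Lemma~\ref{lma:gamma_x} in the same way delivers $\bm{h}(\bm{f}(\alpha\bm{\gamma}))<\alpha\bm{h}(\bm{f}(\bm{\gamma}))$.

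The main obstacle is identifying the correct intermediate scaling bounds, in particular the Bernoulli step for $\bm{f}$; once these are in hand the rest is routine chaining through Lemma~\ref{lma:gamma_x}. Strictness of scalability is delivered entirely by the Bernoulli step, while the algebraic bound on $\bm{h}$ only needs to be paired with the nonstrict monotonicity of Lemma~\ref{lma:gamma_x}, so the chain preserves the strict inequality required by the SIF definition.
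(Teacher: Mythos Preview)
Your argument is correct and takes a genuinely different route from the paper's. The paper does not verify scalability directly; instead it establishes that each component of $\bm{h}\circ\bm{f}$ and of $\bm{f}\circ\bm{h}$ is concave (by decomposing the compositions and invoking the standard closure rules for concavity recorded as Properties~2 and~3), and then cites the result of~\cite{Fehske:2012iw} that a positive concave function automatically satisfies the scalability axiom. Your proof bypasses concavity entirely and checks $\alpha\bm{\vartheta}(\bm{\mu})>\bm{\vartheta}(\alpha\bm{\mu})$ by hand, using two explicit scaling inequalities---one algebraic (driven by $\sigma^2>0$) and one from Bernoulli---and chaining them through the monotonicity of Lemma~\ref{lma:gamma_x}. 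Your route is more elementary and fully self-contained: it needs neither the composition rules for concavity nor the external reference, and it makes transparent exactly where strictness enters. The paper's route, by contrast, isolates a structural property (concavity) that is reused elsewhere in the paper (e.g., in the linearization of Section~\ref{sec:linear}), so it does double duty. One minor remark: for $\bm{h}\circ\bm{f}$ the strict inequality in your final chain is actually carried by the $\sigma^2$-step $h_j(\bm{x}/\alpha)<\alpha h_j(\bm{x})$ rather than by Bernoulli, so your closing sentence slightly overstates Bernoulli's role there, though the argument itself is unaffected.
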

\begin{proof} 
The monotonicity is shown by Lemma~\ref{lma:gamma_x}. To prove the scalability,
we first show the concavity of $\bm{h}(\bm{f}(\bm{\gamma}))$ in $\bm{\gamma}$.
Define function $\zeta$ as 
$\zeta(\bm{\varphi})\triangleq{\sum_{i\in
    \mathcal{I}_j}p_{i}g_{ij}}/{\left(\sum_{k\in \mathcal{I} \backslash
    \mathcal{I}_j}p_{k}g_{kj}\sum_{j\in\mathcal{J}_i}\frac{r_j}{\varphi_j}+\sigma
^2\right)} $.
Note that function $\mathbb{R}\rightarrow\mathbb{R}:x\mapsto1/(\frac{1}{x}+1)$
is concave.  Combined with Property~2, $\zeta$ is concave. Define the function
$\varphi_j$ as 
$
\varphi_j(\gamma_j)\triangleq\log(1+\gamma_j) 
$. 
Note that function $\mathbb{R}\rightarrow\mathbb{R}:x\mapsto\log_2(1+x)$ is
concave. According to Property~2, $\varphi_j$ is concave in $\gamma_j$. We
remark that $\zeta$ is concave and nondecreasing. By Property 3,
$h_j=\zeta(\bm{\varphi}(\bm{\gamma}))$ is concave in $\bm{\gamma}$.  We then
show $\bm{f}(\bm{h}(\bm{x}))$ is concave in $\bm{x}$. Note that function
$\mathbb{R}\rightarrow\mathbb{R}:x\mapsto1\log[1/(1+\frac{1}{x})]$ is concave,
thus for any $i\in[1,n],~f_i(\bm{h}(\bm{x}))$ is concave, according to
Property~2. By the conclusion in~\cite{Fehske:2012iw} that the scalability holds
for any strictly concave function. Hence the conclusion.
\end{proof}

By Lemma~\ref{lma:sif} and Property~1, there exists a unique fixed
point for function $\bm{f}(\bm{h}(\bm{x}))$. The same conclusion
applies for function
$\bm{h}(\bm{f}(\bm{\gamma}))$. 

\begin{lemma}
Suppose $\tilde{\bm{x}}$ is the fixed point of function
$\bm{f}(\bm{h}(\bm{x}))$, i.e., $\tilde{\bm{x}}$ satisfies
$\tilde{\bm{x}}=\bm{f}(\bm{h}(\tilde{\bm{x}}))$, and 
let $\tilde{\bm{\gamma}}=\bm{h}(\tilde{\bm{x}})$. Then $\tilde{\bm{\gamma}}$
is the fixed point of function $\bm{h}(\bm{f}(\bm{\gamma}))$, i.e.,
$\tilde{\bm{\gamma}}=\bm{h}(\bm{f}(\tilde{\bm{\gamma}}))$ holds for $\tilde{\bm{\gamma}}$,
 and vice versa.
\label{lma:vice_versa}
\end{lemma}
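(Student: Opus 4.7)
The plan is to prove the equivalence by direct substitution, leveraging the compositional structure of the coupling equations \eqref{eq:load_coupling}. The statement is essentially an observation that the two fixed-point formulations are change-of-variables of each other under the maps $\bm{h}$ and $\bm{f}$, so no heavy machinery is required; the earlier results (Lemma~\ref{lma:sif} and Property~1) are only needed to guarantee that such fixed points exist and are unique, not for this equivalence itself.

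For the forward direction, I would start from the hypothesis $\tilde{\bm{x}} = \bm{f}(\bm{h}(\tilde{\bm{x}}))$ and the definition $\tilde{\bm{\gamma}} = \bm{h}(\tilde{\bm{x}})$. Applying $\bm{h}$ to both sides of the first identity gives
\begin{equation*}
\bm{h}(\tilde{\bm{x}}) = \bm{h}\bigl(\bm{f}(\bm{h}(\tilde{\bm{x}}))\bigr).
\end{equation*}
Rewriting the left-hand side as $\tilde{\bm{\gamma}}$ and recognizing $\bm{h}(\tilde{\bm{x}})$ inside the right-hand side as $\tilde{\bm{\gamma}}$ as well yields $\tilde{\bm{\gamma}} = \bm{h}(\bm{f}(\tilde{\bm{\gamma}}))$, which is exactly the fixed-point condition for the composed map $\bm{h}\circ\bm{f}$.

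The converse is symmetric: assume $\tilde{\bm{\gamma}} = \bm{h}(\bm{f}(\tilde{\bm{\gamma}}))$ and define $\tilde{\bm{x}} = \bm{f}(\tilde{\bm{\gamma}})$. Applying $\bm{f}$ to both sides of the hypothesis and using the definition of $\tilde{\bm{x}}$ produces $\tilde{\bm{x}} = \bm{f}(\bm{h}(\tilde{\bm{x}}))$, closing the equivalence. It is worth remarking that the identifications $\tilde{\bm{\gamma}} = \bm{h}(\tilde{\bm{x}})$ and $\tilde{\bm{x}} = \bm{f}(\tilde{\bm{\gamma}})$ are mutually consistent, because substituting one into the other simply reproduces the fixed-point identity on the other side.

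The main (and essentially only) subtlety is the bookkeeping to make sure the composed map is applied to the correct variable; there is no genuine obstacle. If desired, uniqueness of the two fixed points—already guaranteed by Lemma~\ref{lma:sif} together with Property~1—can be invoked to state that the bijection $\tilde{\bm{x}} \leftrightarrow \tilde{\bm{\gamma}}$ realized by $\bm{h}$ and $\bm{f}$ is well-defined, so that one may speak of \emph{the} fixed points of both equations and treat them as interchangeable in subsequent analysis.
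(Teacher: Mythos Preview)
Your proof is correct and follows essentially the same direct-substitution argument as the paper, which simply observes that $\tilde{\bm{x}}=\bm{f}(\tilde{\bm{\gamma}})$ and then plugs this into $\tilde{\bm{\gamma}}=\bm{h}(\tilde{\bm{x}})$. Your version is slightly more detailed (you treat both directions explicitly and comment on the mutual consistency of the two identifications), but the underlying idea is identical.
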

\begin{proof}
Since $\tilde{\bm{x}}=\bm{f}(\tilde{\bm{\gamma}})$, by
$\tilde{\bm{\gamma}}=\bm{h}(\tilde{\bm{x}})$, we get
$\tilde{\bm{\gamma}}=\bm{h}(\bm{f}(\tilde{\bm{\gamma}}))$. 
\end{proof}

\section{Problem Formulation and Complexity Analysis} 
\label{sec:formulation}

\subsection{Formulation}

The two problems \MSL{} and \MML{} are formulated in this section. Here,
the association is represented by an $n\times m$
matrix $\bm{\kappa}$, and $\kappa_{ij}=1$ if and only if cell $i$ is currently
serving UE $j$.

\vspace{-5mm}
\begin{subequations} 
\begin{alignat}{2} 
[\MSL{}]&\quad\min\limits_{\bm{\kappa}} \quad \sum_{i\in\mathcal{I}}x_i \\
\textnormal{s.t.} & \quad \bm{x}=\bm{f}(\bm{h}(\bm{x},\bm{\kappa}),\bm{\kappa})
\\ 
&\quad 0< x_{i}\leq 1
\quad\quad\quad\quad\quad~
 i\in\mathcal{I} \\ 
& \quad \kappa_{ij}=0
\quad\quad~~~~\!
 i\notin\mathcal{I}^{+}_j, j\in\mathcal{J}
\\ 
& \quad \kappa_{ij}=1
\quad\quad~~~~\!
 i\in\mathcal{C},  j\in\mathcal{J}^{-}_i \\ 
& \quad \kappa_{ij}\in\{0,1\}
~~~~
 i\in\mathcal{I},~j\in\mathcal{J} 
\end{alignat} 
\label{eq:p0}
\end{subequations} 

\vspace{-8mm}
\begin{subequations} 
\begin{alignat}{2} 
[\MML{}]&\quad\min\limits_{\bm{\kappa}} \quad \max\limits_{i\in\mathcal{I}}~x_i
\\ 
\textnormal{s.t.} & \quad \textnormal{Eq.~(\ref{eq:p0}b) -- Eq.~(\ref{eq:p0}f)}
\end{alignat} 
\label{eq:p1}
\end{subequations}

The optimization variable in both two problems is $\bm{\kappa}$. In \MSL{},
load-coupling constraints are shown in Eq.~(\ref{eq:p0}b). Eq.~(\ref{eq:p0}c)
guarantees that the cell load value is positive but less or equal than 1. The
variable $\bm{\kappa}$ is binary.  Eq.~(\ref{eq:p0}d) imposes that
$\mathcal{I}_j\subseteq{\mathcal{I}^{+}_j}$ and
$\mathcal{J}_i\subseteq{\mathcal{J}^{+}_i},~ i\in\mathcal{I}$.
Constraint Eq.~(\ref{eq:p0}e) imposes
$\mathcal{J}_i^{-}\subseteq\mathcal{J}_i,~ i\in\mathcal{I}$. \MML{}
differs only in the objective with \MSL{}.

\subsection{Complexity Analysis}

\begin{figure*}
\resizebox{1.0\linewidth}{!}{
\xymatrix{c_0\ar[d]^{n+1}& & {c_1} \ar@{>}[d]^{0.5} & & &
{c_2}\ar@{>}[d]^{0.5} & & \ldots & & {c_n} \ar@{>}[d]^{0.5} & \\u_0&
a_1\ar@{.>}[l]|-{1}\ar@{~>}[r]^{0.5}\ar@{.>}@/_1pc/[rrrd]|-{1} & u_1 &
a'_1\ar@{.>}@/_1pc/[rrrrrd]|-{1}\ar@{.>}@/^0.8pc/[lll]|-{1}\ar@{~>}[l]_{0.5}
&
a_2\ar@{.>}@/_1pc/[llll]|-{1}\ar@{.>}[d]|-{1}\ar@{.>}@/_/[rrrrd]|-{1}
\ar@{~>}[r]^{0.5} & u_2 & a'_2
\ar@{.>}@/_1.5pc/[llllll]|-{1}\ar@{~>}[l]_{0.5} &\ldots & a_n
\ar@{.>}@/_2pc/[llllllll]|-{1}\ar@{~>}[r]^{0.5}\ar@{.>}@/_1pc/[lllld]|-{1}
& u_n & a'_n \ar@{.>}@/_4pc/[llllllllll]|-{1}\ar@{~>}[l]_{0.5}
\ar@{.>}@/^1pc/[lld]|-{1}\\ & & &{c_{n+1}}\ar@{>}[r]_{3} &
\underset{(x_1\vee{x_2}\vee{x_n})}{u_{n+1}} & \ldots & &
{c_{n+m}}\ar@{>}[r]_{3} & \underset{(\hat{x}_1\vee{x_2}
\vee{\hat{x}_n})}{u_{n+m}} & } }
\caption{The instance for $\mathcal{NP}$-hardness reduction. 
There are $n+m+1$ UEs in total, denoted by $u_0,u_1,\ldots,u_{n+m}$,
respectively. For each $i\in[0,n+m]$, the home cell of UE $u_i$ is
$c_i$. For any $k\in[n+1,n+m]$, $u_k$ corresponds to a clause. The
numbers on the lines with arrows are the power gains of UEs.  The
gains not shown are negligible and treated as zero.}
\label{fig:np-hard}
\end{figure*}
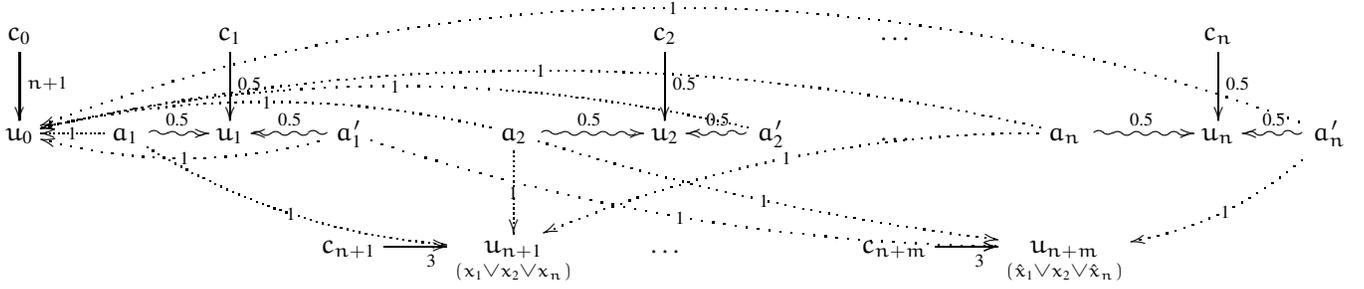

\begin{theorem}
Both \MSL{} and \MML{} are $\mathcal{NP}$-hard.
\label{thm:np-hard}
\end{theorem}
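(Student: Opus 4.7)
The plan is to reduce from 3-SAT using the gadget depicted in Figure~\ref{fig:np-hard}. Given a 3-SAT formula with $n$ variables and $m$ clauses, I would build a network instance with $n+m+1$ UEs and $3n+m+1$ cells, where, for each variable $x_i$, two cells $a_i$ and $a'_i$ play the role of the positive literal $x_i$ and the negative literal $\bar{x}_i$, respectively. The UE $u_0$ (with its very high home-cell gain from $c_0$) together with the variable UEs $u_1,\ldots,u_n$ form a ``consistency gadget'' that forces, in any sufficiently low-load association, $u_i$ to be served by exactly one of $\{a_i, a'_i\}$; this choice encodes the truth value of $x_i$. Each clause UE $u_{n+k}$ with home cell $c_{n+k}$ can then meet the load bound only if at least one of the three literal cells whose literal appears in clause $C_k$ jointly transmits with $c_{n+k}$ to $u_{n+k}$ via JT.

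I would choose a threshold $T$ (depending on the gains $n+1$, $0.5$, $3$ shown in the figure and on $m$) and establish the equivalence: the 3-SAT instance is satisfiable if and only if the optimum of \MSL{} is at most $T$, and likewise for \MML{} with an analogous per-cell threshold. For the forward direction, starting from a satisfying truth assignment, I would associate $u_i$ with $a_i$ iff $x_i$ is true (with $a'_i$ otherwise), and for each clause $C_k$ select one true literal and add its literal-cell to the serving set of $u_{n+k}$. Existence and uniqueness of the resulting cell loads follow from Lemma~\ref{lma:sif} and Property~1, while Lemma~\ref{lma:gamma_x} gives the monotonic control needed to bound these loads below $T$. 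For the reverse direction, I would argue that any association with objective $\leq T$ must, in each variable gadget, commit to exactly one of $a_i, a'_i$ as a serving cell beyond $u_0$ (otherwise the combined traffic of $u_0$ and $u_i$ on both cells pushes their loads past $T$), and that every clause cell $c_{n+k}$ can meet its load bound only when its joint-transmitting partner is a literal cell whose encoded literal is true; hence the induced Boolean assignment satisfies every clause.

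The main obstacle is quantitative. Because each $x_i$ in Eq.~(\ref{eq:load}) depends through $\log_2(1+\gamma_j)$ on an SINR that itself depends on the entire load vector $\bm{x}$, I must verify that the specific gains annotated in the figure yield a clean dichotomy: ``consistent'' associations stay on one side of $T$, while any ``inconsistent'' one either crosses $T$ or, by Eq.~(\ref{eq:p0}c), becomes outright infeasible because some $x_i$ would exceed $1$. I expect to carry this out by using Lemma~\ref{lma:gamma_x} to reduce the comparison to limit SINR values in which interference either vanishes (producing a well-bounded load) or dominates (driving the load above the feasibility limit). Finally, a single construction handles both problems: since the gap between ``satisfiable'' and ``unsatisfiable'' manifests at a single bottleneck cell, the same threshold separates the two regimes under both the sum and the max-load objective, giving the $\mathcal{NP}$-hardness of \MSL{} and \MML{} simultaneously.
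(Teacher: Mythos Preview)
Your plan misreads the clause mechanism of the gadget in Figure~\ref{fig:np-hard}. In the paper's reduction the literal cells $a_i,a'_i$ are \emph{never} in the serving set of any clause UE; the dotted links to $u_{n+k}$ carry interference only. The encoding is the opposite of yours: a literal cell is ``active'' (it JT-serves its variable UE $u_i$, with load exactly~$1$) precisely when the corresponding literal is \emph{false}. The clause UE $u_{n+k}$ is served solely by its home cell $c_{n+k}$ (gain~$3$), and its SINR drops below~$1$---hence $x_{c_{n+k}}>1$---exactly when all three of its literal cells are active, i.e., when the clause is unsatisfied. The reduction is therefore a pure \emph{feasibility} reduction: the 3-SAT instance is satisfiable iff some association satisfies $\bm{x}\le\bm{1}$. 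No separate threshold~$T$ is needed, and both \MSL{} and \MML{} inherit hardness at once because feasibility is a prerequisite for either objective.

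Your proposed mechanism, where a literal cell must JT with $c_{n+k}$ to rescue the clause UE, does not give a correct reverse implication with these gains. In your encoding a ``false'' literal has its cell idle (it is not serving $u_i$), and nothing in the model prevents that idle cell from joining the serving set of $u_{n+k}$ and supplying the needed SINR boost; thus even an unsatisfiable formula would admit a low-load association. Conversely, a ``true'' literal cell already sits at load~$1$ from serving $u_i$, so adding a clause UE would overload it---the opposite of what your argument requires. Your reading of the $u_0$ gadget is similarly off: $u_0$ is served only by $c_0$, and the role of $a_i,a'_i$ toward $u_0$ is again interference; choosing both $a_i$ and $a'_i$ to serve $u_i$ overloads $c_0$ (not the literal cells), which is what forces exactly one of each pair to be active.
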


\begin{proof}
The idea is to use a polynomial-time reduction from the 3-satisfiability
($3$-\textit{SAT}) problem that is $\mathcal{NP}$-complete. Consider a
$3$-\textit{SAT} instance with $n$ Boolean variables $b_1,b_2,\ldots,b_n$, and
$m$ clauses. A variable or its negation is referred to as a literal. Denote by
$\hat{b}_i$ the negation of $b_i, i\in[1,n]$. Each clause is composed by a
disjunction of exactly three distinct literals, e.g., $(b_1\vee b_2\vee b_n)$.
The $3$-\textit{SAT} problem amounts to determining whether or not there exists
an assignment of true/false values to the variables, such that all clauses are
satisfied (i.e., at least one literal has value true in every clause).  For any
instance of $3$-\textit{SAT}, we construct a corresponding network scenario
shown in \figurename~\ref{fig:np-hard}. The transmit power of all cells, the
demand of all UEs, and the noise effect $\sigma^2$ are all uniformly set to
$1.0$. For all variables $b_1,b_2,\ldots,b_n$, we define $n$ UEs
$u_1,u_2,\ldots,u_n$. For the $m$ clauses, we define UEs
$u_{n+1},u_{n+2},\ldots,u_{n+m}$. For any $u_i,i\in[1,n+m]$, there is a home
cell $c_i$. According to the system model in Section~\ref{sec:system_model}, 
$u_i$ is associated with $c_i$, and possibly other cells.
For any variable $b_i$, we define two cells $a_i$ and $a'_i$,
representing the two literals and having symmetric gain for UE $u_i$.  There are
thus $3n+m$ cells in total.  The gain values between the cells and UEs are shown
in \figurename~\ref{fig:np-hard}. For any UE $u_i$, $i\in[1,n+m]$, the gain
values of $c_i$, $a_i$ and $a'_i$ equal $0.5$. For any $i\in[1,m]$, UE
$u_{n+i}$ has gain 1.0 from the cells that represent the literals in clause $i$,
whereas the gains from the other cells are negligible.  For example, the last
clause in \figurename~\ref{fig:np-hard}, i.e., clause $m$, is $(\hat{b}_1\vee
b_2\vee\hat{b}_n)$, defined for UE $u_{n+m}$. Then $u_{n+m}$ has non-negligible
gain from $a'_1$, $a_2$ and $a'_n$, and these gains are set to $1.0$. In
addition, we define an extra cell-UE pair $c_0$ and $u_0$. The gain values from
$a_i,a'_i, i \in{1,n}$ to $u_0$ are $1.0$. All other gains are negligible,
treated as zero and not shown in \figurename~\ref{fig:np-hard}.

We make several observations. For any $i\in[1,n]$, if $u_i$ is
only served by the home cell $c_i$, then $\gamma_{u_i}=0.5/1=0.5$ implying that
$x_{c_i}=1/\log_2(1+0.5)>1$ and hence the demand of $u_i$ cannot be
satisfied. For any $i\in[1,n]$, if $u_i$ is served by both $a_i$ and $a'_i$
besides the home cell $c_i$, then $c_0$ will be overloaded. To arrive at the
conclusion, observe that $\gamma_{u_i}=(0.5+0.5+0.5)/1=1.5$ resulting in
$x_{a_i}=x_{a'_i}=\log_2 2.5$. Then $\gamma_{u_0}=(n+1)/(2\log_2
2.5+(n-1)+1)<1$, leading to $x_{c_0}>1$. Therefore the demand of $u_0$ cannot be
met. Hence, for any pair $a_i$ and $a'_i$, one and exactly one of them will have
UE associated, and its load is $1/\log_2(1+(0.5+0.5)/1)=1.0$. For each clause, the
three cells corresponding to the literals of the clause cannot be all active in
serving UEs. Consider for example clause $(b_1 \vee b_2 \vee b_n)$.  If $a_1$,
$a_2$, and $a_n$ are all serving UEs, then
$\gamma_{u_{n+1}}=3/(1\times1.0+1\times1.0+1\times1.0+1)<1$, causing
$x_{c_{n+1}}>1$. However, we can verify that the demand of $u_{n+1}$ can be
satisfied if at most two of $a_1$, $a_2$ and $a_n$ have UEs to serve.

Suppose there is an association that ensures all the user's demands satisfied.
For each variable $b_i$, we set its value to be true if $a'_i$ is serving any
UE\@. Otherwise $a_i$ must be serving UEs instead, and we set $b_i$ to be false.
Now we evaluate the satisfiability of each clause. For convenience, consider
clause $(\hat{x}_1\vee x_2\vee\hat{x}_n)$ in \figurename~\ref{fig:np-hard}.
Since not all three cells $a'_1$, $a_2$ and $a'_n$ can be in the status of
serving UEs, at least one of $\hat{b}_1$, $b_2$ and $\hat{b}_n$ is true, and the
clause is true. Thus the
$3$-\textit{SAT} instance is feasible. Conversely, suppose we have a feasible solution for
the $3$-\textit{SAT} instance. Then we choose $a_i$ to serve, together with
$c_i$, UE $u_i$, if $\hat{b}_i$ is true.  If $\hat{b}_i$ is false, $a'_i$ is
chosen instead. Doing so satisfies the demands $u_0, u_1, \dots, u_n$.
Moreover, the demands $u_{n+1},u_{n+2},\ldots,u_{n+m}$ become satisfied as
well, because at most two out of the three cells defined for the three literals
of the clause will be serving any UE\@. Thus the association is feasible.
Hence the conclusion.  
\end{proof} 

Theorem~\ref{thm:np-hard} implies that no low-complexity and exact algorithm can
be expected, unless
$\mathcal{P}=\mathcal{NP}$. In addition, even for a sub-optimal algorithm, the
evaluation of candidate cell-UE association solutions is not straightforward,
due to the load-coupling constraints (\ref{eq:p0}b) and (\ref{eq:p1}b).  For
each candidate solution, obtaining the objective value requires to solve the
load-coupling equations, and this is typically done by fixed-point iterations.


\section{Problem Solving via Linear Approximation} 
\label{sec:linear}

\begin{figure}[t]
\centering
\includegraphics[width=\linewidth]{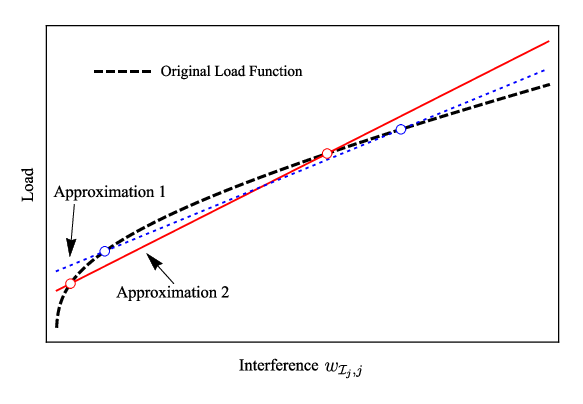}
\vspace{-4mm}
\caption{
Linear approximations for the concave load-coupling function. The
horizontal axis is the received interference and the vertical axis is the
load for serving a UE\@. The circles on each line denote the points used for construct the 
linear approximation.}
\label{fig:linear_approximation}
\end{figure}

In this section, we provide a solution approach 
based on linearization of the load function.
We use $f_{\mathcal{I}_j,j}$ to refer to the load of serving UE $j$ by
JT of cells in $\I_j$.  Note that this load occurs in all cells in
$\I_j$. The value of $f_{\mathcal{I}_j,j}$ is determined by the SINR,
which, in turn, depends on the amount of interference.  For UE $j$
served by cells in $\I_j$, we use $w_{\mathcal{I}_j,j}$ to denote the
interference variable, i.e., $w_{\mathcal{I}_j,j} =
\sum_{k\in\mathcal{I}\backslash\mathcal{I}_j}p_{k}g_{kj}x_{k}$.
It is easily proved that function $f_{\mathcal{I}_j,j}$ is concave
in $w_{\mathcal{I}_j,j}$ (cf.\ Lemma~\ref{lma:sif}), see
\figurename~\ref{fig:linear_approximation} for an illustration.  
The load-coupling equation is then as
follows.

\begin{equation}
\left\{
\begin{array}{ll}
x_i=\sum_{j\in\mathcal{J}_i}f_{\I_j,j}(w_{\I_j,j}) & \quad  i\in\mathcal{I}\\
w_{\I_j,j}=\sum_{k\in\mathcal{I}\backslash\mathcal{I}_j}p_{k}g_{kj}x_{k} & \quad
j\in\mathcal{J}
\end{array}
\right.
\label{eq:load_coupling2}
\end{equation}

Consider replacing $f(w_{\I_j,j})$ in
Eq.~(\ref{eq:load_coupling2}) by a linear function as an
approximation. The linear approximation can be obtained by taking two
points within the interference range of interest and evaluating the
nonlinear load function values, followed by constructing the
corresponding linear function that coincides with the nonlinear
function at these two points. Denote the two points by
$\undercirc{W}_{\I_j,j}$ and $\overcirc{W}_{\I_j,j}$, with
$\undercirc{W}_{\I_j,j}<\overcirc{W}_{\I_j,j}$, for UE $j$ and its
serving cell set $\I_j$. The approximation is illustrated in
\figurename~\ref{fig:linear_approximation}, for two different choices
of $\undercirc{W}_{\I_j,j}$ and $\overcirc{W}_{\I_j,j}$.  In the
figure, the $x$-axis is the interference variable $w_{\I_j,j}$, of
which the magnitude is typically $10^{-10}$ for a UE\@. We do not
explicitly show the values of the axis to keep the generality of the
illustration, because the interference depends highly on the specific
scenario and UE under consideration.

In equation form, the linear approximation reads

\vspace{-4mm}
\begin{equation}
l_{\mathcal{I}_j,j}(w_{\mathcal{I}_j,j})\triangleq s_{\mathcal{I}_j,j}w_{\mathcal{I}_j,j}+\mu_
{\mathcal{I}_j,j},
\label{eq:lower_bound} 
\end{equation}
where the two constants 
$s_{\I_j,j}$ and $\mu_{\I_j,j}$ are computed as follows.

\begin{equation}
s_{\mathcal{I}_j,j}\triangleq\frac{f_{\mathcal{I}_j,j}(\overcirc{W}_{\mathcal{I}_j,j})-f_{\mathcal
    {I}_j,j}(\undercirc{W}_{\mathcal{I}_j,j})}{\overcirc{W}_{\mathcal{I}_j,j}
-\undercirc{W}_{\mathcal{I}_j,j}}
\label{eq:slope}
\end{equation}

\begin{equation}
\mu_{\mathcal{I}_j,j}\triangleq
f_{\mathcal{I}_j,j}(\undercirc{W}_{\mathcal{I}_j,j})-
\undercirc{W}_{\mathcal{I}_j,j}s_{\I_j,j}
\label{eq:intercept}
\end{equation}

We use $\L_j$ to denote the set of possible association scenarios for
UE $j$; each element of $\L_j$ is a subset of the candidate set
$\I^+_j$.  Note that $|\L_j| = 2^{|\mathcal{I}_j^{+}|-1}$.  This is
because by minimum one cell has to be selected for association, and
the association must contain at least the home cell $c_j$ (which is an
element of $\I^+_j$). In the sequel, we use $\ell$ to index the
elements of $\L_j$. Note that because $\L_j$ is a set of sets, $\ell
\in \L_j$ is a set and corresponds to entity $\I_j$ in the above
introduction of linear approximation. The association variable
$\kappa_{ij}$ in \MSL{} is accordingly generalized to be
$\kappa_{\ell,j}$. We further introduce a constant $T_{\ell, j}$, defined as

\begin{equation}
T_{\ell, j} \triangleq \sum_{i \in \I \backslash \ell} p_{i}g_{ij}, ~~\ell \in \L_j, j \in \J  
\label{eq:big_T}
\end{equation}

Note that $T_{\ell, j}$, by definition, is an upper bound of interference, i.e.,
$w_{\ell,j} \leq T_{\ell, j}$, $\ell\in\mathcal{L}_j$,
$j\in\mathcal{J}$, because maximum load of all interfering cells is assumed in~\eqref{eq:big_T}.

Based on the linear approximation, we derive an MILP formulation of \MSL{}. The MILP formulation
is presented below.

\vspace{-2mm}
\begin{subequations} 
\begin{alignat}{2} 
[\textit{S--MILP}]~&\min\limits_{\bm{w},\bm{\kappa}} \quad
\sum_{i\in\mathcal{I}}x_i \\ 
\textnormal{s.t.}~& x_i=\sum\limits_{j \in \mathcal{J}}\sum\limits_{\ell\in\mathcal{L}_j:i\in\ell}\left(s_{\ell,j}w
_{\ell,j}+\mu_{\ell,j}\kappa_{\ell,j}\right) \\ \nonumber 
&
\quad\quad\quad\quad\quad\quad\quad\quad\quad\quad\quad\quad\quad\quad\quad
i\in\mathcal{I} \\ 
& 
w_{\ell,j}\geq\sum\limits_{i\in\mathcal{I}\backslash\ell}p_{i}g_{ij}x_i-T_{\ell, j}(1-\kappa_{\ell,j})
~ \\ \nonumber
&\quad\quad\quad\quad\quad\quad\quad\quad\quad\quad\quad\quad\!\!\!\ell\in\mathcal{L}_j,j\in\mathcal{J} \\ 
& w_{\ell,j}\geq{} 0
\quad\quad\quad\quad\quad\quad\quad\quad
{} \ell\in\mathcal{L}_j,j\in\mathcal{J}\\ 
& 0\leq{} x_i\leq{} 1
\quad\quad\quad\quad\quad\quad\quad\quad\quad\quad~
{} i\in\mathcal{I}\\
& \sum_{\ell\in{\mathcal{L}_j}}\kappa_{\ell,j}=1
\quad\quad\quad\quad\quad\quad\quad\quad\quad~
{} j\in\mathcal{J} \\
& \kappa_{\ell,j}\in\{0,1\}
\quad\quad\quad\quad\quad\quad~
{}\ell\in\mathcal{L}_j,j\in\mathcal{J} 
\end{alignat}
\label{eq:p2} 
\end{subequations}

Selecting one $\kappa_{\ell,j}~(\ell\in\mathcal{L}_j)$, as done in
Eq.~(\ref{eq:p2}f), corresponds to Eq.~(\ref{eq:p0}d) and Eq.~(\ref{eq:p0}e). 
The constraints shown in
Eq.~(\ref{eq:p2}b)--Eq.~(\ref{eq:p2}d) provide the effect of a linear
approximation of the non-linear constraint~(\ref{eq:p0}b). Note that
the desired term in the right-hand side of Eq.~(\ref{eq:p2}c) is
$\left( s_{\ell,j}w_{\ell,j}+\mu_{\ell,j}\right)\kappa_{\ell,j}$,
which is nonlinear because $w_{\ell,j}$ and $\kappa_{\ell,j}$ are
both variables. To see that we can achieve the same effect using
$\left(s_{\ell,j}w _{\ell,j}+\mu_{\ell,j}\kappa_{\ell,j}\right)$ and
the other constraints, consider separately the two cases
$\kappa_{\ell,j}=0$ and $\kappa_{\ell,j}=1$.  If $\kappa_{\ell,j}=0$,
Eq.~(\ref{eq:p2}c) becomes void, and hence there is no restriction on
$w_{\ell, j}$ except non-negativity.  Due to minimization, $w_{\ell,
j}=0$, and hence the entire term
$s_{\ell,j}w_{\ell,j}+\mu_{\ell,j}\kappa_{\ell,j}$ equals $0$, meaning
that $j$ does not impose any load of cells not serving the UE\@.  If
$\kappa_{\ell,j}=1$, then clearly $\left(s_{\ell,j}w
_{\ell,j}+\mu_{\ell,j}\kappa_{\ell,j}\right) = \left(s_{\ell,j}w
_{\ell,j}+\mu_{\ell,j}\right)\kappa_{\ell,j}$. In this case, UE $j$
is associated with cells in $\ell$, and $w_{\ell,j}\geq
\max\{\sum_{i\in\mathcal{I}\backslash\ell}p_{i}g_{ij}x_i, 0\}$ by
Eq.~(\ref{eq:p2}c) and Eq.~(\ref{eq:p2}d). Since the problem is
minimization, either Eq.~(\ref{eq:p2}c) or Eq.~(\ref{eq:p2}d) will
hold with equality. Thus we have
$w_{\ell,j}=\sum_{i\in\mathcal{I}\backslash\ell}p_{i}g_{ij}x_i$ that
indeed is the total amount of interference, provided
that the set of serving cells is $\ell$. Consequently the term
$s_{\ell,j}w_{\ell,j}+\mu_{\ell,j}\kappa_{\ell,j}$ represents the load
of the serving cells for UE $j$, computed by the linear function.
For \MML{}, the corresponding MILP formulation can be obtained via a slight
modification of Eq.~(\ref{eq:p2}a). We omit further details as the
modification is straightforward.
The solution approach of using the MILP formulation for \MSL{}
is summarized in Algorithm~\ref{alg:S--MILP}. 

\begin{algorithm}[tb]
\KwOut{$\bm{x}^{opt}$, $\bm{\kappa}^{opt}$}
\For{$j\in\mathcal{J}$, $\ell\in\mathcal{L}_j$}
{
Choose $\undercirc{W}_{\ell,j},\overcirc{W}_{\ell,j}\geq 0$, with
$\undercirc{W}_{\ell,j} < \overcirc{W}_{\ell,j}$\;
Compute $s_{\ell,j}$, $\mu_{\ell,j}$ by Eq.~(\ref{eq:slope}) and
Eq.~(\ref{eq:intercept})\;
}
Formulate \textit{S--MILP} as in Eq.~(\ref{eq:p2})\;
$\bm{\kappa}^{opt}\leftarrow$ optimum of \textit{S--MILP}\;
$\bm{x}^{opt}\leftarrow$ fixed point of
$\bm{f}(\bm{h}(\bm{x},\bm{\kappa}^{opt}),\bm{\kappa}^{opt})$\; 
\label{line:trueload}
\caption{MILP-based Load Minimization for \MML{}.}
\label{alg:S--MILP}
\end{algorithm}

The main advantage of using the linear approximation is to enable a
linear form of the optimization problem as an MILP, for which there
are standard optimization tools for problem solving to approach global
optimum. This serves two purposes. First, it yields an approximative
solution to the original problem, enabling a vis-a-vis comparison to
any algorithm that focuses on low complexity. Such an algorithm is
presented later in Section~\ref{sec:minl}. Second, and more
importantly, we will demonstrate how problem solving using linear
approximation leads to a bounding scheme, in terms of providing a
lower bound on the global minimum that effectively gauges the amount
of (worst-case) optimality gap and thereby performance assessment of
any sub-optimal algorithm. To these ends, Algorithm~\ref{alg:S--MILP}
is not distributed by nature. Rather, it is intended for use within the
cloud radio access network (C-RAN) architecture for future networks.

We remark that for solving \textit{S--MILP}, the standard approach
consists in branch-and-bound (and branch-and-cut), which scales
substantially better than an exhaustive search. Still, the scalability
is an issue, as the complexity, in the worst-case, is exponential; 
this justifies the derivation of the algorithm in
Section~\ref{sec:minl}. Note also that, for the purpose of bounding to
gain performance insights of other, low-complexity algorithms,
Algorithm~\ref{alg:S--MILP} is not intended to be run online.

Clearly, the solution from solving the linear optimization problem
varies by the chosen linear approximation which is defined by the
points $\undercirc{W}_{\I_j,j}$ and $\overcirc{W}_{\I_j,j}$.  In
general, these points should be close to the amount of interference
$w_{\I_j,j}$ at global optimum, which is of course not known a priori.
However, if one has a solution from any sub-optimal algorithm, e.g.,
the algorithm we derive in Section~\ref{sec:minl}, the output (i.e.,
expected interference) can be used to guide the selection. Moreover,
if $\undercirc{W}_{\I_j,j}$ and $\overcirc{W}_{\I_j,j}$ are chosen
such that the interference at optimum falls within the range
$[\undercirc{W}_{\I_j,j}, \overcirc{W}_{\I_j,j}]$, then solving the
problem using the approximation guarantees a lower bound of the global
minimum.  The trivial choice to achieve the bounding effect is to set
$\undercirc{W}_{\I_j,j}$ and $\overcirc{W}_{\I_j,j}$ to zero and the
most extreme of amount interference, respectively.  The latter
corresponds to assuming all other cells are interfering with full
load. However, as can be realized from
\figurename~\ref{fig:linear_approximation}, the gap between the 
linear function and the nonlinear one can be large for this choice of
range. Later in Section~\ref{sec:bounds}, we derive non-trivial yet
tractable solutions to reduce the range (while still guaranteeing its
validity) and thereby significantly strengthen the bound.

Instead of using the type of linear approximation that has been
discussed above, one can consider the first-order Taylor series
approximation, by constructing a linear function via the first-order
derivative of the nonlinear function at a selected point, such as the
middle point of $[\undercirc{W}_{\I_j,j}, \overcirc{W}_{\I_j,j}]$.
This linear approximation, although can be used in the MILP
formulation for a problem solution, does not enable a bounding scheme,
because the approximation over-estimates the true load value.  We also
remark that higher orders of Taylor series approximation result in
nonlinear objective functions that, from an optimization viewpoint, do
not exhibit advantage in comparison to the original nonlinear function.

\section{Load Minimization via Optimality Conditions} 
\label{sec:minl}

The second solution approach that we propose is based on improving
cell load by adjusting cell-UE association. This approach can be used
in conjunction with the MILP-based solutions, to further improve the
association obtained by solving the MILPs.  For a UE and its serving
cells, an adjustment amounts to either expanding the set of serving
cells with one additional cell, or removing one cell from the set of
serving cells. In the sequel, we use the term
\textit{link adjustment} to refer to the action of updating the cell-UE
association as described above.

\subsection{Link Adjustment: Basic Properties}

For link adjustment, we use $v$ and $u$ to respectively denote the
cell and UE under consideration.  The corresponding load and the SINR
functions are $f_v(\bm{x})$ and $h_u(\bm{\gamma})$, as 
defined in Eq.~\eqref{eq:sinr} and Eq.~\eqref{eq:load}, with the only
notational difference that $v$ and $u$ replace $i$ and $j$,
respectively, and hence the set of serving cells in Eq.~\eqref{eq:sinr} is
$\I_u$, and the set of served UEs in Eq.~\eqref{eq:load} is
$\J_v$.

Suppose $v \notin \I_u$, and the link adjustment operation is to
expand the set of serving cells to $\mathcal{I}_u\cup\{v\}$.
Consequently the set of cells of UE $u$ is reduced from
$\mathcal{I}\backslash\mathcal{I}_u$ to
$\mathcal{I}\backslash(\mathcal{I}_u\cup\{v\})$. We denote the vector
of SINR functions, resulted from the link adjustment, by $\bm{h}^+$,
with $\bm{h}^+=[h_1^{+}(\bm{x}),h_2^+(\bm{x}),\ldots,h_m^+(\bm{x})]$,
where for UE $u$, $h_u^{+}$ is given in
Eq.~\eqref{eq:adding_vartheta}, and $h_j^+(\bm{x})=h_j(\bm{x})$, for
all $j\neq u$, because the cell association remains for UEs other than
$u$.

\begin{equation} 
{h}^{+}_u(\bm{x})\triangleq\frac{\sum_{i\in
\mathcal{I}_u\bigcup\{v\}}p_{i}g_{iu}}{\sum_{k\in
\mathcal{I}\backslash(\mathcal{I}_j\cup\{v\})}p_{k}g_{ku}x_{ku}+\sigma^2}
\label{eq:adding_vartheta}
\end{equation}

For cell $v$, the set of associated UEs is expanded from
$\mathcal{J}_v$ to $\mathcal{J}_v\cup\{u\}$.  For all cells other than
$v$, the UE association remains.  Thus the vector of load function,
after the new, augmented association of UE $u$ to cell $v$, is
$\bm{f}^+(\bm{\gamma})=[f_1^1(\bm{\gamma}),f_2^+(\bm{\gamma}),\ldots,f_n^+(\bm
{\gamma})]$, where for cell $v$ the load function $f_v^{+}$ is
formulated in Eq.~(\ref{eq:adding_varphi}), and
$f_i^+(\bm{x})=f_i(\bm{x})$, for all $i\neq v$.

\vspace{-2mm}
\begin{equation}
{f}^{+}_v(\bm{\gamma})\triangleq\sum_{j\in\mathcal{J}_v\bigcup\{u\}}\frac{r_j}{MB\log
_2\left(1+\gamma_u\right)}
\label{eq:adding_varphi} 
\end{equation}

Next, we define the corresponding entities for 
removing one existing cell-UE association of UE $u$ and
cell $v$, assuming that $v$ is not the home cell of UE $u$.  That is,
the set of UE $u$'s serving cells is reduced from $\mathcal{I}_u$ to
$\mathcal{I}_u\backslash\{v\}$. The set of cells generating
interference to $u$ is expanded from
$\mathcal{I}\backslash\mathcal{I}_u$ to
$(\mathcal{I}\backslash\mathcal{I}_u)\cup\{v\}$, and the set of UEs
associated with cell $v$ is reduced from $\mathcal{J}_v$ to
$\mathcal{J}_v\backslash\{u\}$. The resulting SINR and
load functions are shown in Eq.~(\ref{eq:removing_vartheta}) and
Eq.~(\ref{eq:removing_varphi}), and denoted by $h^{-}_u$ and
$f^{-}_v$, respectively. The two vectors of functions are consistently
denoted by $\bm{h^-}$ and $\bm{f^-$}. Note that for any cell $i \not=
v$, $f^{-}_{i} (\bm{\gamma})=f_{i} (\bm{\gamma})$, and for any UE $j
\not=u$, $h^{-}_{j} (\bm{x})=h_{j} (\bm{x})$.

\begin{equation} 
{h}^{-}_u(\bm{x})\triangleq\frac{\sum_{i\in
\mathcal{I}_u\backslash\{v\}}p_{i}g_{iu}}{\sum_{k\in(\mathcal{I
}\backslash\mathcal{I}_u)\cup\{v\}}p_{k}g_{ku}x_{ku}+\sigma^2}
\label{eq:removing_vartheta} \end{equation} \begin{equation}
{f}^{-}_v(\bm{\gamma})\triangleq
\sum_{j\in\mathcal{J}_v\backslash\{u\}}\frac{r_j}{MB\log_2\left(1+\gamma
_u\right)}
\label{eq:removing_varphi} 
\end{equation}

Given the above definitions, the load-coupling equations after
expanding respectively reducing one cell-UE association via a link
adjustment of cell $v$ and UE $u$, are provided in
Eq.~(\ref{eq:load_coupling_plus}) and
Eq.~(\ref{eq:load_coupling_minus}).  Here, notation $\circ$ refers to
the compound function, i.e., $\bm{f}\circ\bm{h}(\cdot)$ means
$\bm{f}(\bm{h}(\cdot))$.

\begin{equation} 
\bm{x}=\bm{{f}^{+}}\circ\bm{{h}^{+}}~(\bm{x})
\label{eq:load_coupling_plus} 
\end{equation}
\vspace{-3mm}
\begin{equation} 
\bm{x}=\bm{{f}^{-}}\circ\bm{{h}^{-}}~(\bm{x})
\label{eq:load_coupling_minus} 
\end{equation}

We are interested in whether or
not the fixed points of above improve the load given by the fixed point of $\bm{f} \circ \bm{h} (\bm{x})$. 
To this end, we derive and 
prove conditions for load improvement.

\begin{lemma} 
The following properties hold true.  
\begin{enumerate} 
\item $\bm{{f}} \circ\bm{{h}^{+}}(\bm{x})\leq\min\left\{\bm{{f}
}\circ\bm{{h}}(\bm{x}),\bm{{f}}^{+}\circ\bm{{h}}^{+}(\bm{x})\right\}, 
\bm{x} \in \mathbb{R}^n_+$.  \item
    $\bm{{f}}\circ\bm{{h}}^{-}(\bm{x})\geq\max\left\{\bm{{f}}\circ\bm{{h}}(\bm{x}),\bm{{f}}^{-}\circ\bm{{h}}^{-}(\bm{x})\right\},
     \bm{x} \in \mathbb{R}^n_+$.
\end{enumerate}
\label{lma:x_bounds}	
\end{lemma}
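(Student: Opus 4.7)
The plan is to obtain both inequalities by comparing $\bm{h}^{+}$ to $\bm{h}$ (resp.\ $\bm{h}^{-}$ to $\bm{h}$) and $\bm{f}^{+}$ to $\bm{f}$ (resp.\ $\bm{f}^{-}$ to $\bm{f}$) componentwise directly from their defining expressions~\eqref{eq:adding_vartheta}, \eqref{eq:adding_varphi}, \eqref{eq:removing_vartheta}, \eqref{eq:removing_varphi}, and then to invoke the monotonicity statements of Lemma~\ref{lma:gamma_x} to chain them through the composed maps. No concavity or fixed-point machinery is needed for this lemma; the argument is purely a bookkeeping one.

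For part 1, I first argue that $\bm{h}^{+}(\bm{x}) \geq \bm{h}(\bm{x})$ for every $\bm{x} \in \mathbb{R}^n_+$. Indeed, $h^{+}_{j} = h_{j}$ for all $j \neq u$ by definition, while for UE $u$ the transition from $\bm{h}$ to $\bm{h}^{+}$ adds the nonnegative term $p_{v} g_{vu}$ to the numerator and simultaneously removes the nonnegative term $p_{v} g_{vu} x_{v}$ from the denominator of~\eqref{eq:sinr}; both modifications are SINR-increasing, so $h^{+}_{u}(\bm{x}) \geq h_{u}(\bm{x})$. Applying part~1 of Lemma~\ref{lma:gamma_x} (which states $\bm{f}$ is non-increasing in $\bm{\gamma}$) yields $\bm{f}(\bm{h}^{+}(\bm{x})) \leq \bm{f}(\bm{h}(\bm{x}))$. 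For the other half, I observe $\bm{f}^{+} \geq \bm{f}$ pointwise: they coincide on every cell $i \neq v$, and at cell $v$ the function $f^{+}_{v}$ in~\eqref{eq:adding_varphi} has one additional nonnegative summand compared to $f_{v}$ in~\eqref{eq:load}. Evaluating both at $\bm{h}^{+}(\bm{x})$ gives $\bm{f} \circ \bm{h}^{+}(\bm{x}) \leq \bm{f}^{+} \circ \bm{h}^{+}(\bm{x})$. Taking the minimum of the two upper-bounded quantities delivers the claim.

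Part~2 is the mirror image, reversing every inequality. From~\eqref{eq:removing_vartheta}, $h^{-}_{u}$ is obtained from $h_{u}$ by subtracting $p_{v} g_{vu}$ from the numerator and adding $p_{v} g_{vu} x_{v}$ to the denominator, both SINR-decreasing; hence $\bm{h}^{-}(\bm{x}) \leq \bm{h}(\bm{x})$, and the monotonicity in Lemma~\ref{lma:gamma_x} gives $\bm{f}(\bm{h}^{-}(\bm{x})) \geq \bm{f}(\bm{h}(\bm{x}))$. Similarly, $f^{-}_{v}$ in~\eqref{eq:removing_varphi} drops one nonnegative term relative to $f_{v}$ while $f^{-}_{i} = f_{i}$ for $i \neq v$, so $\bm{f}^{-} \leq \bm{f}$ pointwise and hence $\bm{f}^{-} \circ \bm{h}^{-}(\bm{x}) \leq \bm{f} \circ \bm{h}^{-}(\bm{x})$. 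Taking the maximum of the two lower-bounded quantities completes the proof.

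The only mildly delicate step is ensuring the signs are read correctly when passing through the monotonicity of Lemma~\ref{lma:gamma_x} (which is ``reversed'' for $\bm{f}$ as a function of $\bm{\gamma}$), so I would state that lemma once explicitly before invoking it in each direction. Aside from that, every inequality used follows from inspection of the algebraic forms of $\bm{h}^{\pm}$ and $\bm{f}^{\pm}$, and the whole argument does not require the load-coupling fixed point to exist—the inequalities are pointwise in $\bm{x}$.
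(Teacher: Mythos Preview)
Your proof is correct and follows essentially the same approach as the paper: compare $\bm{h}^{\pm}$ to $\bm{h}$ and $\bm{f}^{\pm}$ to $\bm{f}$ componentwise from their definitions, then push these inequalities through the composition using the monotonicity of $\bm{f}$ in $\bm{\gamma}$ from Lemma~\ref{lma:gamma_x}. Your write-up is slightly more explicit about why $\bm{h}^{+}\geq\bm{h}$ and $\bm{f}^{+}\geq\bm{f}$ hold, but the argument is identical in structure to the paper's.
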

\begin{proof} 
For 1), by the monotonicity of $\bm{f}$ and that
$\bm{{h}}^{+}(\bm{x})\geq\bm{{h}}(\bm{x})$ with strict inequality for UE $u$,
we obtain $\bm{{f}}\circ\bm{{h}}^{ +}(\bm{x})\leq\bm{{f}}\circ\bm{{h}}(\bm{x})$.
Function $\bm{h}$ exhibits monotonicity as well, and
$\bm{{f}}^{+}(\bm{\gamma})\geq\bm{{f}}(\bm{\gamma})$ for any $\bm{\gamma} \in
\mathbb{R}^m_+$, with strict inequality for cell $v$, thus $\bm{{f}}^{+
}\circ\bm{{h}}^{+}(\bm{x})\geq\bm{f}\circ\bm{h}^+(\bm{x})$. Statement 2) can be
proved analogously.
\end{proof}

\begin{lemma} 
The following properties hold true.  
\begin{enumerate} 
\item
$\bm{{h}}\circ\bm{{f}}^{+}(\bm{\gamma})\leq\min\{\bm{{h}
}\circ\bm{{f}}(\bm{\gamma}),\bm{{h}}^{+}\circ\bm{{f}^{+}}(\bm{\gamma})\},
 \bm{\gamma} \in \mathbb{R}^m_+$.
\item
$\bm{{h}}\circ\bm{{f}}^{-}(\bm{\gamma})\geq\max\{\bm{{h}
}\circ\bm{{f}}(\bm{\gamma}),\bm{{h}}^{-}\circ\bm{{f}^{-}}(\bm{\gamma})\},
 \bm{\gamma} \in \mathbb{R}^m_+$.
\end{enumerate} 
\label{lma:gamma_bounds} 
\end{lemma}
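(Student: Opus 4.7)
The plan is to mirror the proof of Lemma~\ref{lma:x_bounds} with the roles of $\bm{f}$ and $\bm{h}$ interchanged. The two structural inputs I would establish first are component-wise inequalities between the modified functions and the original ones: (a) $\bm{f}^{+}(\bm{\gamma}) \geq \bm{f}(\bm{\gamma})$ with strict inequality in the $v$-th component, because Eq.~\eqref{eq:adding_varphi} appends one positive summand to $f_v$ while leaving $f_i$ untouched for $i \neq v$; and (b) $\bm{h}^{+}(\bm{x}) \geq \bm{h}(\bm{x})$ with strict inequality in the $u$-th component, because Eq.~\eqref{eq:adding_vartheta} enlarges the signal numerator of $h_u$ by $p_v g_{vu}$ and shrinks its interference denominator by $p_v g_{vu} x_v$, while $h_j$ is unchanged for $j \neq u$.

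For statement (1), from (a) together with the monotonicity of $\bm{h}$ (part~2 of Lemma~\ref{lma:gamma_x}), I obtain $\bm{h}\circ\bm{f}^{+}(\bm{\gamma}) \leq \bm{h}\circ\bm{f}(\bm{\gamma})$. From (b), evaluated at the argument $\bm{f}^{+}(\bm{\gamma}) \in \mathbb{R}^n_+$, I get $\bm{h}^{+}\circ\bm{f}^{+}(\bm{\gamma}) \geq \bm{h}\circ\bm{f}^{+}(\bm{\gamma})$. Combining these two bounds yields the claimed $\min$.

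Statement (2) is the dual: removing cell $v$ from the serving set of UE $u$ drops one summand of $f_v$ and also shrinks $h_u$ (the signal loses the term $p_v g_{vu}$ and $v$ migrates into the interference sum), so $\bm{f}^{-}(\bm{\gamma}) \leq \bm{f}(\bm{\gamma})$ and $\bm{h}^{-}(\bm{x}) \leq \bm{h}(\bm{x})$, with strict inequality in components $v$ and $u$ respectively. Re-running the monotonicity chain with all inequalities reversed produces $\bm{h}\circ\bm{f}^{-}(\bm{\gamma}) \geq \bm{h}\circ\bm{f}(\bm{\gamma})$ and $\bm{h}\circ\bm{f}^{-}(\bm{\gamma}) \geq \bm{h}^{-}\circ\bm{f}^{-}(\bm{\gamma})$, which together deliver the $\max$.

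I do not anticipate a real obstacle: the argument is essentially the transpose of the proof of Lemma~\ref{lma:x_bounds}, and the only care required is bookkeeping, namely checking that each modification is applied in the correct component of $\bm{f}^{\pm}$ or $\bm{h}^{\pm}$ and that Lemma~\ref{lma:gamma_x} is invoked in its correct direction (larger load implies smaller SINR, and larger SINR implies smaller load). If anything is delicate, it is simply verifying (b) for the $-$ case, where one must note that the numerator decreases by $p_v g_{vu}$ and the denominator increases by $p_v g_{vu} x_v + \sigma^2$-independent terms, both of which push $h_u^{-}$ below $h_u$; the remaining bookkeeping is routine.
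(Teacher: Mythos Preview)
Your proposal is correct and follows essentially the same approach as the paper: both argue from $\bm{f}^{+}(\bm{\gamma})\geq\bm{f}(\bm{\gamma})$ together with the monotonicity of $\bm{h}$ to get the first inequality, then apply $\bm{h}^{+}(\bm{x})\geq\bm{h}(\bm{x})$ pointwise at $\bm{x}=\bm{f}^{+}(\bm{\gamma})$ for the second, with statement~(2) handled by symmetry. Your write-up is in fact slightly cleaner, since the paper's phrase ``utilizing the monotonicity of $\bm{f}$'' in the second step is superfluous---that inequality comes directly from evaluating $\bm{h}^{+}\geq\bm{h}$ at a common argument, exactly as you do.
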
 
\begin{proof} 
Because $\bm{h}$ has the monotonicity property and 
$\bm{{f}}^{+}(\bm{\gamma})\geq\bm{{f}}(\bm{\gamma})$ with strict 
inequality for cell $v$, we obtain $\bm{{h}
}\circ\bm{{f}}^{+}(\bm{\gamma})\leq\bm{{h}}\circ\bm{{f}}(\bm{\gamma})$.
Utilizing the monotonicity of 
$\bm{f}$ and 
$\bm{{h}}^{+}(\bm{x})\geq\bm{{h}}(\bm{x})$ for any $x\in\mathbb{R}^n_+$ with
strict inequality for UE $u$, we have $\bm{{h}}^{+}\circ\bm{{f
}}^{+}(\bm{\gamma})\geq\bm{h}\circ\bm{f}^+(\bm{\gamma})$. The observations lead
to statement 1), and statement 2) can be proved analogously.
\end{proof}

\subsection{Conditions for Cell Load Reduction}
\label{sec:condition}

We prove several conditions under which the load improves.
In Section~\ref{sec:design}, these conditions will be embedded into
the link-adjustment optimization algorithm later on.

\begin{definition}
Given functions $\bm{f}$, $\bm{f}^{+}$, $\bm{f}^{-}$ and $\bm{h}$,
$\bm{h}^{+}$, $\bm{h}^{-}$, we define the following notation.
\begin{itemize}
\item Denote by $\tilde{\bm{x}}$ the fixed point of function $\bm{f}\circ\bm{h}$,
    i.e., $\tilde{\bm{x}}$ is solution of $\bm{x}=\bm{{f}}\circ\bm{{h}}({\bm{x}})$, and denote
    $\tilde{\bm{\gamma}} \triangleq \bm{h}(\tilde{\bm{x}})$.
\item Denote by $\overplus{\bm{x}}$ the fixed point of function
$\bm{f}^{+}\circ\bm{h}^{+}$, i.e., $\overplus{\bm{x}}$ is the solution of
${\bm{x}}=\bm{f}^{+}\circ\bm{h}^{+}({\bm{x}})$, and 
denote $\overplus{\bm{\gamma}} \triangleq \bm{h}^{+}(\overplus{\bm{x}})$.
\item Denote by $\overminus{\bm{x}}$ the fixed point of function
$\bm{f}^{-}\circ\bm{h}^{-}$, i.e., $\overminus{\bm{x}}$ is the solution of
${\bm{x}}=\bm{f}^{-}\circ\bm{h}^{-}({\bm{x}})$, and 
denote $\overminus{\bm{\gamma}} \triangleq \bm{h}^{-}(\overminus{\bm{x}})$.
\end{itemize}
\label{def:notations_for_proofs}
\end{definition}

\begin{theorem} 
Consider the sequence $\bm{x}^{(0)}$, $\bm{x}^{(1)}$, $\dots$, where
$\bm{x}^{(0)}=\tilde{\bm{x}}$, and
$\bm{x}^{(t)}=\bm{{f}}\circ\bm{{h}}^{+}(\bm{x}^{(t-1)}), t\geq 1$.
 Then $\overplus{\bm{x}}\leq\tilde{\bm{x}}$, 
if for any iteration $k \geq 1$ we have $f_v^{+}\circ\bm{h}^{+}(\bm{x}^{(k)})\leq
x^{(k)}_v$.
\label{thm:adding_sufficient} 
\end{theorem}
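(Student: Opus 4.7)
The plan is to use Property~1 (the SIF fixed-point iteration property) applied to the SIF $\bm{f}^{+}\circ\bm{h}^{+}$, with $\bm{x}^{(k)}$ (for $k\geq 1$) as the initial iterate, and conclude $\overplus{\bm{x}}\leq \bm{x}^{(k)}\leq\tilde{\bm{x}}$. The key observation is that although the defining iteration uses $\bm{f}\circ\bm{h}^{+}$ (not $\bm{f}^{+}\circ\bm{h}^{+}$), the functions $\bm{f}$ and $\bm{f}^{+}$ differ only in the $v$-th component, so the hypothesis ($f_v^{+}\circ\bm{h}^{+}(\bm{x}^{(k)})\leq x_v^{(k)}$) plus the monotone decrease of $\bm{x}^{(t)}$ will be exactly what is needed to verify the one-step decrease condition of Property~1 for $\bm{f}^{+}\circ\bm{h}^{+}$.

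First I would show that the sequence $\{\bm{x}^{(t)}\}$ is componentwise non-increasing. Since $\bm{x}^{(0)}=\tilde{\bm{x}}$ is a fixed point of $\bm{f}\circ\bm{h}$, Lemma~\ref{lma:x_bounds}(1) gives $\bm{x}^{(1)}=\bm{f}\circ\bm{h}^{+}(\bm{x}^{(0)})\leq\bm{f}\circ\bm{h}(\bm{x}^{(0)})=\bm{x}^{(0)}$. Then by the monotonicity of $\bm{f}$ and $\bm{h}^{+}$ (established via Lemma~\ref{lma:gamma_x} and Lemma~\ref{lma:sif}), a straightforward induction yields $\bm{x}^{(t+1)}\leq\bm{x}^{(t)}$ for all $t\geq 0$.

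Next I would fix any $k\geq 1$ and show that $\bm{f}^{+}\circ\bm{h}^{+}(\bm{x}^{(k)})\leq\bm{x}^{(k)}$ componentwise. For every $i\neq v$, we have $f_i^{+}=f_i$, so the $i$-th component equals $f_i\circ\bm{h}^{+}(\bm{x}^{(k)})=x_i^{(k+1)}\leq x_i^{(k)}$ by the monotone decrease established above. For $i=v$, the hypothesis of the theorem gives directly $f_v^{+}\circ\bm{h}^{+}(\bm{x}^{(k)})\leq x_v^{(k)}$. Combining the two cases produces the required vector inequality.

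Finally, I would invoke Property~1 with $\bm{\vartheta}=\bm{f}^{+}\circ\bm{h}^{+}$ (an SIF by Lemma~\ref{lma:sif}) and initial iterate $\bm{x}^{(k)}$: since the first iterate satisfies $\bm{\vartheta}(\bm{x}^{(k)})\leq\bm{x}^{(k)}$, the induced sequence is monotonically decreasing and converges to the unique fixed point $\overplus{\bm{x}}$, whence $\overplus{\bm{x}}\leq\bm{x}^{(k)}\leq\bm{x}^{(0)}=\tilde{\bm{x}}$. The main obstacle is the notational subtlety that the sequence $\{\bm{x}^{(t)}\}$ is generated by $\bm{f}\circ\bm{h}^{+}$ while the target fixed point is that of $\bm{f}^{+}\circ\bm{h}^{+}$; the proof hinges on the fact that $\bm{f}$ and $\bm{f}^{+}$ agree outside coordinate $v$, which converts the scalar hypothesis into a vector comparison suitable for Property~1.
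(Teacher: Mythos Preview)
Your proposal is correct and follows essentially the same approach as the paper's proof: both first establish that the sequence $\{\bm{x}^{(t)}\}$ is componentwise non-increasing (via Lemma~\ref{lma:x_bounds} and monotonicity/Property~1), then use the fact that $f_i^{+}=f_i$ for $i\neq v$ together with the hypothesis at $i=v$ to obtain $\bm{f}^{+}\circ\bm{h}^{+}(\bm{x}^{(k)})\leq\bm{x}^{(k)}$, and finally invoke Property~1 for the SIF $\bm{f}^{+}\circ\bm{h}^{+}$ started at $\bm{x}^{(k)}$ to conclude $\overplus{\bm{x}}\leq\bm{x}^{(k)}\leq\tilde{\bm{x}}$. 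The only cosmetic difference is that the paper explicitly names the second fixed-point sequence $\overplus{\bm{x}}^{(t)}$, whereas you apply Property~1 directly to the one-step decrease.
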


\begin{proof}
Suppose $f_v^{+}\circ\bm{h}^{+}(\bm{x}^{(k)})\leq
x^{(k)}_v$ for iteration $k$.
By Lemma~\ref{lma:x_bounds}, we have $
\bm{x}^{(1)}=\bm{{f}}\circ\bm{{h}}^{+}(\bm{x}^{(0)})\leq\bm{{f}}\circ\bm{{h}}
(\bm{x}^{(0)})$. Since $\bm{x}^{(0)}=\tilde{\bm{x}}$,
$\bm{x}^{(0)}=\bm{f}\circ\bm{h}(\bm{x}^{(0)})$ holds. Thus
we obtain $\bm{x}^{(1)}\leq\bm{x}^{(0)}$. 
Therefore, by Property~1, 
\begin{equation} 
\bm{x}^{(k+1)}\leq\bm{x}^{(k)}\leq\bm{x}^{(k-1)}\leq\cdots\leq\bm{x}^{(1)}\leq\bm{x}^{(0)}
\label{eq:k_leq_k-1}
\end{equation}
To obtain $\overplus{\bm{x}}$, consider solving equation
$\bm{x}=\bm{f}^{+}\circ\bm{h}^{+}(\bm{x})$ by fixed-point
iterations. Let $\overplus{\bm{x}}^{(0)}$ be the initial point, with
$\overplus{\bm{x}}^{(0)}=\bm{x}^{(k)}$, and define the generic
iteration by $\overplus{\bm{x}}^{(t)}=\bm{f}^{+}\circ\bm{{h}}^{+}(\overplus{\bm{x}}^{(t-1)})$
for $t \geq 1$.
For cell $v$, we have
$\overplus{x}_{v}^{(1)}=f_v^{+}\circ\bm{h}^{+}(\overplus{\bm{x}}^{(0)})=
f_v^{+}\circ\bm{h}^{+}(\bm{x}^{(k)})$. 
Note that $f_v^{+}\circ\bm{h}^{+}(\bm{x}^{(k)})\leq x_v^{(k)}$.
By the construction $\overplus{\bm{x}}^{(0)}=\bm{x}^{(k)}$ we have
$f_v^{+}\circ\bm{h}^{+}(\bm{x}^{(k)})\leq x^{(k)}_v=\overplus{x}^{(0)}_v$. 
Therefore we obtain
\begin{equation}
\overplus{x}_v^{(1)}\leq\overplus{x}^{(0)}_v 
\label{eq:plus_v}
\end{equation}
For any cell $i\neq v$, recall that 
${f}^{+}_i\circ\bm{{h}}^{+}(\bm{x})={f}_i\circ\bm{{h}}^{+}(\bm{x})$ for any
$\bm{x}\in\mathbb{R}^n_{+}$. Thus
$\overplus{x}_{i}^{(1)}={f}^{+}_i\circ\bm{{h}}^{+}(\overplus{\bm{x}}^{(0)})={f}_i\circ\bm{{h}}^{+}(\overplus{\bm{x}}^{(0)})=f_{i}\circ\bm{h}^{+}(\bm{x}^{(k)})=x^{(k+1)}_i$.
Combined with Eq.~(\ref{eq:k_leq_k-1}), we have 
\begin{equation}
\overplus{x}_i^{(1)}\leq x_i^{(0)}\quad i\neq v
\label{eq:plus_i}
\end{equation}

Eq.~(\ref{eq:plus_v}) and Eq.~(\ref{eq:plus_i}) lead to the
conclusion that $\overplus{\bm{x}}^{(1)}\leq\overplus{\bm{x}}^{(0)}$. 
By Property 1, at convergence the following is true.
\begin{equation}
\overplus{\bm{x}}\leq\cdots\leq\overplus{\bm{x}}^{(2)}\leq\overplus{\bm{x}}^{(1)}
\leq\overplus{\bm{x}}^{(0)}=\bm{x}^{(k)}
\label{eq:x_converge}	
\end{equation}

Combined with Eq.~(\ref{eq:k_leq_k-1}), we obtain
$\overplus{\bm{x}}\leq\tilde{\bm{x}}$. Hence the conclusion.
\end{proof}

Theorem~\ref{thm:adding_sufficient} provides a partial optimality
condition for an optimization algorithm to examine whether or not
adding a link improves the performance. Let $\bm{x}^{(0)}$ be the load
vector before adding the link between cell $v$ and UE $u$, and
consider the sequence $\bm{x}^{(0)},\bm{x}^{(1)},\ldots$, with
$\bm{x}^{(t)}=\bm{f}\circ\bm{h}^{+}(\bm{x}^{(t-1)})$.  If the
condition in the theorem holds, the load of all cells could be reduced
by adding the link, yielding better objective value for both
\textit{MinSumL} and \textit{MinMaxL}.

\begin{theorem} 
Consider the sequence $\bm{\gamma}^{(0)}$, $\bm{\gamma}^{(1)}$, $\dots$, where
$\bm{\gamma}^{(0)} = \tilde{\bm{\gamma}}$, and   
$\bm{\gamma}^{(t)}=\bm{h}\circ\bm{f}^{-}(\bm{\gamma}^{(t-1)}), t\geq 1$.
Then $\overminus{\bm{x}}\leq\tilde{\bm{x}}$, if for any iteration $k \geq 1$ we have 
$h_u^{-}\circ\bm{f}^{-}(\bm{\gamma}^{(k)})\geq\gamma_u^{(k)}$.
\label{thm:removing_sufficient} 
\end{theorem}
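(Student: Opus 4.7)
The plan is to mirror the proof of Theorem~\ref{thm:adding_sufficient}, but working in $\bm{\gamma}$-space with reversed inequalities, and then to translate the resulting $\bm{\gamma}$-comparison back to $\bm{x}$-space at the end. The three-stage structure is: (i) show the probing sequence $\bm{\gamma}^{(t)}$ is monotone nondecreasing; (ii) restart the full iteration for $\bm{h}^{-}\circ\bm{f}^{-}$ at $\bm{\gamma}^{(k)}$ and show its first step is again monotone in the same direction, so that the limit $\overminus{\bm{\gamma}}$ dominates $\tilde{\bm{\gamma}}$; (iii) pass from $\overminus{\bm{\gamma}} \geq \tilde{\bm{\gamma}}$ to $\overminus{\bm{x}} \leq \tilde{\bm{x}}$ via $\bm{f}^{-}$.

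For stage (i), since $\bm{\gamma}^{(0)} = \tilde{\bm{\gamma}}$ and Lemma~\ref{lma:vice_versa} gives $\tilde{\bm{\gamma}} = \bm{h}\circ\bm{f}(\tilde{\bm{\gamma}})$, the bound $\bm{h}\circ\bm{f}^{-}(\bm{\gamma}) \geq \bm{h}\circ\bm{f}(\bm{\gamma})$ from Lemma~\ref{lma:gamma_bounds}(2) yields $\bm{\gamma}^{(1)} \geq \bm{\gamma}^{(0)}$. Because $\bm{h}\circ\bm{f}^{-}$ is a composition of two componentwise-decreasing maps (Lemma~\ref{lma:gamma_x} applied to $\bm{h}$ and to $\bm{f}^{-}$, which is structurally the same as $\bm{f}$), it is monotone nondecreasing, so the inequality $\bm{\gamma}^{(1)} \geq \bm{\gamma}^{(0)}$ propagates by induction to $\tilde{\bm{\gamma}} = \bm{\gamma}^{(0)} \leq \bm{\gamma}^{(1)} \leq \cdots \leq \bm{\gamma}^{(k)} \leq \bm{\gamma}^{(k+1)}$.

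For stage (ii), I would restart the iteration $\overminus{\bm{\gamma}}^{(t)} = \bm{h}^{-}\circ\bm{f}^{-}(\overminus{\bm{\gamma}}^{(t-1)})$ from $\overminus{\bm{\gamma}}^{(0)} = \bm{\gamma}^{(k)}$ and check $\overminus{\bm{\gamma}}^{(1)} \geq \overminus{\bm{\gamma}}^{(0)}$ componentwise. For component $u$ this is exactly the hypothesis $h_u^{-}\circ\bm{f}^{-}(\bm{\gamma}^{(k)}) \geq \gamma_u^{(k)}$. For $j \neq u$, the observation $h_j^{-} = h_j$ gives $\overminus{\gamma}^{(1)}_j = h_j\circ\bm{f}^{-}(\bm{\gamma}^{(k)}) = \gamma^{(k+1)}_j \geq \gamma^{(k)}_j$ by stage~(i). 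Since $\bm{h}^{-}\circ\bm{f}^{-}$ is itself an SIF (by the same argument as Lemma~\ref{lma:sif}), Property~1 applied in its ascending direction then produces $\overminus{\bm{\gamma}} \geq \overminus{\bm{\gamma}}^{(0)} = \bm{\gamma}^{(k)} \geq \tilde{\bm{\gamma}}$. For stage (iii), $\overminus{\bm{x}} = \bm{f}^{-}(\overminus{\bm{\gamma}})$, and Lemma~\ref{lma:gamma_x}(1) (larger $\bm{\gamma}$, smaller $\bm{f}^{-}$) gives $\overminus{\bm{x}} \leq \bm{f}^{-}(\tilde{\bm{\gamma}})$; since $f^{-}_v$ drops UE $u$'s term while $f^{-}_i = f_i$ for $i \neq v$, one also has $\bm{f}^{-}(\tilde{\bm{\gamma}}) \leq \bm{f}(\tilde{\bm{\gamma}}) = \tilde{\bm{x}}$, yielding $\overminus{\bm{x}} \leq \tilde{\bm{x}}$.

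The main subtlety I anticipate is in stage~(ii): Property~1 is stated only in the descending form $\bm{\vartheta}(\bm{\mu}^{(k)}) \leq \bm{\vartheta}(\bm{\mu}^{(k-1)})$, whereas here I need its ascending counterpart. It should be legitimate either to invoke the symmetric statement as a standard SIF fact (a monotone nondecreasing iteration of an SIF is bounded above by the unique fixed point and hence converges upward to it), or to give a brief self-contained argument using the monotone convergence theorem together with continuity of $\bm{h}^{-}\circ\bm{f}^{-}$; everything else in the proof is a direct translation of Theorem~\ref{thm:adding_sufficient} with $+$ replaced by $-$, $\leq$ replaced by $\geq$, and $\bm{x}$ replaced by $\bm{\gamma}$.
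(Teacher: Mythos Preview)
Your proposal is correct and follows the same three–stage structure as the paper's proof: establish monotonicity of the probing sequence $\bm{\gamma}^{(t)}$ via Lemma~\ref{lma:gamma_bounds}, restart the full $\bm{h}^{-}\!\circ\bm{f}^{-}$ iteration at $\bm{\gamma}^{(k)}$ and verify the first step is nondecreasing componentwise (using the hypothesis at $u$ and $h_j^{-}=h_j$ for $j\neq u$), then translate $\overminus{\bm{\gamma}}\geq\tilde{\bm{\gamma}}$ back to load space. The only substantive difference is in the closing step: the paper passes through $\bm{f}$ and Lemma~\ref{lma:vice_versa} directly, whereas you (more carefully) use $\overminus{\bm{x}}=\bm{f}^{-}(\overminus{\bm{\gamma}})$, monotonicity of $\bm{f}^{-}$, and the pointwise bound $\bm{f}^{-}(\tilde{\bm{\gamma}})\leq\bm{f}(\tilde{\bm{\gamma}})=\tilde{\bm{x}}$; your version is cleaner and avoids what look like sign/superscript slips in the paper's last two displayed lines. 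Your flagged subtlety about needing the ascending analogue of Property~1 is real and is glossed over in the paper as well; the short argument you sketch (monotone increasing iterates of an SIF are bounded above by the unique fixed point, hence converge to it) is exactly what is needed in both proofs.
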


\begin{proof} 
Suppose $h_u^{-}\circ\bm{f}^{-}(\bm{\gamma}^{(k)})\geq\gamma_u^{(k)}$ for iteration $k$. By
Lemma~\ref{lma:gamma_bounds}, we have
$\bm{\gamma}^{(1)}=\bm{h}\circ\bm{f}^{-}(\bm{\gamma}^{(0)})
\geq\bm{h}\circ\bm{f}(\bm{\gamma}^{(0)})$. Since 
$\bm{\gamma}^{(0)}=\tilde{\gamma}$,
$\bm{\gamma}^{(0)}=\bm{h}\circ\bm{f}(\bm{\gamma}^{(0)})$ holds. Hence
$\bm{\gamma}^{(1)}\geq\bm{\gamma}^{(0)}$. Therefore, by
Property~1, 
\begin{equation}
\bm{\gamma}^{(k+1)}\geq\bm{\gamma}^{(k)}\geq\bm{\gamma}^{(k-1)}\geq\cdots
\geq\bm{\gamma}^{(1)}\geq\bm{\gamma}^{(0)}.
\label{eq:k_geq_k-1}
\end{equation}

To obtain $\overminus{\bm{\gamma}}$, consider solving
$\bm{\gamma}=\bm{h}^{-}\circ\bm{f}^{-}(\bm{\gamma})$ by fixed-point
iterations. Let $\overminus{\bm{\gamma}}^{(0)}$ be the initial point,
with $\overminus{\bm{\gamma}}^{(0)}=\bm{\gamma}^{(k)}$, and define the
generic iteration by
$\overminus{\bm{\gamma}}^{(t)}=\bm{h}^{-}\circ\bm{f}^{-}(\overminus{\bm{\gamma}}^{(t-1)})$,
for $t \geq 1$.  For UE $u$, we have
$\overminus{\gamma}_u^{(1)}=h_u^{-}\circ\bm{f}^{-}(\overminus{\bm{\gamma}}^{(0)})
=h_u^{-}\circ\bm{f}^{-}(\bm{\gamma}^{(k)})$. Note that
$h_u^{-}\circ\bm{f}^{-}(\bm{\gamma}^{(k)})\leq\gamma_u^{(k+1)}$. Combined
with the construction $\overminus{\bm{\gamma}}^{(0)}=\bm{\gamma}^{(k)}$, we have
$h_u^{-}\circ\bm{f}^{-}(\bm{\gamma}^{(k)})\leq\gamma_{u}^{(k)}=
\overminus{\gamma}_{u}^{(0)}$. Therefore 
\begin{equation}
\overminus{\gamma}_u^{(1)}\geq\overminus{\gamma}_u^{(0)}
\label{eq:minus_u}
\end{equation}

For any UE $j\neq u$, recall that 
$h_{j}^{-}\circ\bm{f}^{-}(\bm{\gamma})=h_{j}\circ\bm{f}^{-}(\bm{\gamma})$ for
any $\bm{\gamma}\in\mathbb{R}^m_{+}$. Thus
$\overminus{\gamma}_j^{(1)}=h_j^{-}\circ\bm{f}^{-}(\overminus{\bm{\gamma}}^{(0)})=
h_j\circ\bm{f}^{-}(\overminus{\bm{\gamma}}^{(0)})=h_j\circ\bm{f}^{-}(\bm{\gamma}^{(k)})
=\gamma_j^{(k+1)}$. Combined with Eq.~(\ref{eq:k_geq_k-1}), we have
\begin{equation}
\overminus{\gamma}_j^{(1)}\geq\gamma_j^{(0)}\quad j\neq u
\label{eq:minus_j}
\end{equation}

By Eq.~(\ref{eq:minus_u}) and Eq.~(\ref{eq:minus_j}), one concludes
that $\overminus{\bm{\gamma}}^{(1)}\geq\overminus{\bm{\gamma}}^{(0)}$. By
Property 1, at convergence the following holds.
\begin{equation}
    \overminus{\bm{\gamma}}\geq\cdots\geq\overminus{\bm{\gamma}}^{(2)}\geq\overminus{\bm{\gamma}}^{(1)}
    \geq\overminus{\bm{\gamma}}^{(0)}=\bm{\bm{\gamma}}^{(k)}
\label{eq:gamma_converge}
\end{equation}
Combined with Eq.~(\ref{eq:k_geq_k-1}), we obtain
\begin{equation}
\overminus{\bm{\gamma}}\geq\tilde{\bm{\gamma}}
\end{equation}
Utilizing the above inequality and Lemma~\ref{lma:gamma_x}, we have
\begin{equation}
\label{eq:gammageq}
\bm{f}(\overminus{\bm{\gamma}})\geq\bm{f}(\tilde{\bm{\gamma}})
\end{equation}
By Lemma~\ref{lma:vice_versa}, $\overminus{\bm{x}}=\bm{f}(\overminus{\bm{\gamma}})$ and 
$\tilde{\bm{x}}=\bm{f}(\tilde{\bm{\gamma}})$. These, together 
with Eq.~\eqref{eq:gammageq}, give the conclusion.
\end{proof}

Similar to Theorem~\ref{thm:adding_sufficient},
Theorem~\ref{thm:removing_sufficient} can be used algorithmically, to
examine if an improvement can be obtained from removing a link between
a cell-UE pair.

The following corollaries provide two necessary conditions of load reduction.

\begin{corollary} 
For sequence $\bm{\gamma}^{(0)}$, $\bm{\gamma}^{(1)}$, $\dots$,
where $\bm{\gamma}^{(0)} =\tilde{\bm{\gamma}}$, and
$\bm{\gamma}^{(t)}=\bm{h}\circ\bm{f}^{+}(\bm{\gamma}^{(t-1)}), t\geq 1$,
$\overplus{\bm{x}}\leq\tilde{\bm{x}}$ with
strict inequality for at least one element, 
only if $\forall t\geq 1$,
$h_u^{+}\circ\bm{f}^{+}(\bm{\gamma}^{(t)})>\gamma_u^{(t)}$.
\label{thm:adding_necessary} 
\end{corollary}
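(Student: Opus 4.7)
The plan is to prove the corollary by contraposition: I would assume that there exists some $k\ge 1$ with $h_u^{+}\circ\bm{f}^{+}(\bm{\gamma}^{(k)})\le\gamma_u^{(k)}$ and derive a violation of $\overplus{\bm{x}}\le\tilde{\bm{x}}$. The first step is to verify that the sequence $\{\bm{\gamma}^{(t)}\}$ is monotonically non-increasing. Since $\bm{\gamma}^{(0)}=\tilde{\bm{\gamma}}$, Lemma~\ref{lma:vice_versa} gives $\tilde{\bm{\gamma}}=\bm{h}\circ\bm{f}(\tilde{\bm{\gamma}})$, and then Lemma~\ref{lma:gamma_bounds} yields $\bm{\gamma}^{(1)}=\bm{h}\circ\bm{f}^{+}(\bm{\gamma}^{(0)})\le\bm{h}\circ\bm{f}(\bm{\gamma}^{(0)})=\bm{\gamma}^{(0)}$. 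Because $\bm{h}\circ\bm{f}^{+}$ is an SIF by the reasoning in Lemma~\ref{lma:sif}, Property~1 ensures the full sequence is componentwise non-increasing.

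Next, I would seed a fixed-point iteration of $\bm{h}^{+}\circ\bm{f}^{+}$ at $\bm{\gamma}^{(k)}$ and verify $\bm{h}^{+}\circ\bm{f}^{+}(\bm{\gamma}^{(k)})\le\bm{\gamma}^{(k)}$ componentwise. For UE $u$ this is exactly the hypothesis. For any UE $j\neq u$, the link adjustment only affects $u$, so $h_j^{+}=h_j$, and therefore $h_j^{+}\circ\bm{f}^{+}(\bm{\gamma}^{(k)})=h_j\circ\bm{f}^{+}(\bm{\gamma}^{(k)})=\gamma_j^{(k+1)}\le\gamma_j^{(k)}$, invoking the monotonicity just established. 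Since $\bm{h}^{+}\circ\bm{f}^{+}$ is also an SIF, Property~1 makes the new iteration non-increasing and convergent to the unique fixed point of $\bm{h}^{+}\circ\bm{f}^{+}$. By Lemma~\ref{lma:vice_versa} applied in the augmented system, that fixed point is $\overplus{\bm{\gamma}}$, so $\overplus{\bm{\gamma}}\le\bm{\gamma}^{(k)}\le\tilde{\bm{\gamma}}$.

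Finally, I would lift this SINR bound back to the load domain. From $\overplus{\bm{\gamma}}\le\tilde{\bm{\gamma}}$ and the monotonicity in Lemma~\ref{lma:gamma_x}, $\bm{f}^{+}(\overplus{\bm{\gamma}})\ge\bm{f}^{+}(\tilde{\bm{\gamma}})$, and moreover $f_v^{+}(\tilde{\bm{\gamma}})>f_v(\tilde{\bm{\gamma}})=\tilde{x}_v$ because $f_v^{+}$ adds the strictly positive load term of UE $u$ in cell $v$. Thus $\overplus{x}_v=f_v^{+}(\overplus{\bm{\gamma}})>\tilde{x}_v$, which contradicts $\overplus{\bm{x}}\le\tilde{\bm{x}}$. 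The main obstacle I expect is the operator mismatch: the sequence $\{\bm{\gamma}^{(t)}\}$ is generated by the hybrid $\bm{h}\circ\bm{f}^{+}$, whereas $\overplus{\bm{\gamma}}$ is the fixed point of the fully augmented $\bm{h}^{+}\circ\bm{f}^{+}$. The bridging device is that the UE-$u$-only hypothesis, combined with the non-increasing behavior of the original sequence, supplies the componentwise descent needed to re-seed a convergent iteration of $\bm{h}^{+}\circ\bm{f}^{+}$ from $\bm{\gamma}^{(k)}$.
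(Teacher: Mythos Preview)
Your proof is correct and follows essentially the same route as the paper's: the paper's proof is a one-liner, ``Suppose there is $k\ge 1$ such that $h_u^{+}\circ\bm{f}^{+}(\bm{\gamma}^{(k)})\le\gamma_u^{(k)}$. Following the flow of arguments in the proof of Theorem~\ref{thm:removing_sufficient}, $\overplus{\bm{x}}\ge\tilde{\bm{x}}$,'' and you have correctly spelled out that flow (monotone descent of $\{\bm{\gamma}^{(t)}\}$, re-seeding the $\bm{h}^{+}\circ\bm{f}^{+}$ iteration at $\bm{\gamma}^{(k)}$, convergence to $\overplus{\bm{\gamma}}\le\tilde{\bm{\gamma}}$, and the lift back to the load domain). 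The only cosmetic difference is the final step: the paper concludes the full vector inequality $\overplus{\bm{x}}\ge\tilde{\bm{x}}$ (via $\bm{f}^{+}(\overplus{\bm{\gamma}})\ge\bm{f}^{+}(\tilde{\bm{\gamma}})\ge\bm{f}(\tilde{\bm{\gamma}})=\tilde{\bm{x}}$), whereas you isolate the strict inequality $\overplus{x}_v>\tilde{x}_v$ at cell $v$; either suffices to contradict the hypothesis.
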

\begin{proof} 
Suppose there is $k\geq 1$ such
that $h_u^{+}\circ\bm{f}^{+}(\bm{\gamma}^{(k)})\leq\gamma_u^{(k)}$. 
Following the flow of arguments in the proof of Theorem~\ref{thm:removing_sufficient},
$\overplus{\bm{x}}\geq\tilde{\bm{x}}$. Hence the conclusion.
\end{proof}

\begin{corollary} 
For sequence $\bm{x}^{(0)}$, $\bm{x}^{(1)}$, $\dots$, where
$\bm{x}^{(0)} = \tilde{\bm{x}}$, and $\bm{x}^{(t)}
=\bm{f}\circ\bm{h}^{-}(\bm{x}^{(t-1)}), t\geq 1$,
$\overminus{\bm{x}}\leq\tilde{\bm{x}}$ 
with strict inequality for at least one element,
only if $\forall t\geq 1$, $f_v^{-}\circ\bm{h}^{-}(\bm{x}^{(t)})<x_v^{(t)}$.
\label{thm:removing_necessary} 
\end{corollary}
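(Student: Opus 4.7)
The plan is to prove this by contrapositive, following the same template that Corollary~\ref{thm:adding_necessary} used (which appealed to the flow of arguments in Theorem~\ref{thm:removing_sufficient}); here the roles are swapped, so I would instead mirror the proof of Theorem~\ref{thm:adding_sufficient}. Concretely, I assume there exists some $k \geq 1$ with $f_v^{-}\circ\bm{h}^{-}(\bm{x}^{(k)}) \geq x_v^{(k)}$ and aim to conclude $\overminus{\bm{x}} \geq \tilde{\bm{x}}$, which rules out the hypothesized strict inequality in any component.

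First I would establish that the sequence $\bm{x}^{(0)}, \bm{x}^{(1)}, \dots$ is non-decreasing. Since $\bm{x}^{(0)} = \tilde{\bm{x}}$ is the fixed point of $\bm{f}\circ\bm{h}$, we have $\bm{x}^{(0)} = \bm{f}\circ\bm{h}(\bm{x}^{(0)})$, and Lemma~\ref{lma:x_bounds}(2) gives $\bm{f}\circ\bm{h}^{-}(\bm{x}^{(0)}) \geq \bm{f}\circ\bm{h}(\bm{x}^{(0)})$, hence $\bm{x}^{(1)} \geq \bm{x}^{(0)}$. Because $\bm{f}\circ\bm{h}^{-}$ is an SIF (by Lemma~\ref{lma:sif} applied to the reduced association), Property~1 yields the monotonic chain $\bm{x}^{(0)} \leq \bm{x}^{(1)} \leq \cdots \leq \bm{x}^{(k)} \leq \bm{x}^{(k+1)}$.

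Next I would launch fixed-point iterations for $\bm{f}^{-}\circ\bm{h}^{-}$ starting from $\overminus{\bm{x}}^{(0)} = \bm{x}^{(k)}$ and check the first-step inequality component-wise. For cell $v$ the hypothesis gives $\overminus{x}_v^{(1)} = f_v^{-}\circ\bm{h}^{-}(\bm{x}^{(k)}) \geq x_v^{(k)} = \overminus{x}_v^{(0)}$. For any cell $i \neq v$, the link adjustment does not change the load function, so $f_i^{-}\circ\bm{h}^{-} = f_i \circ \bm{h}^{-}$ and therefore $\overminus{x}_i^{(1)} = f_i \circ \bm{h}^{-}(\bm{x}^{(k)}) = x_i^{(k+1)} \geq x_i^{(k)} = \overminus{x}_i^{(0)}$. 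Hence $\overminus{\bm{x}}^{(1)} \geq \overminus{\bm{x}}^{(0)}$, and invoking Property~1 once more shows the iterates converge upward to $\overminus{\bm{x}}$, so $\overminus{\bm{x}} \geq \overminus{\bm{x}}^{(0)} = \bm{x}^{(k)} \geq \tilde{\bm{x}}$. This contradicts $\overminus{\bm{x}} \leq \tilde{\bm{x}}$ with strict inequality on at least one coordinate, completing the contrapositive.

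The only delicate step is making sure Property~1 can be applied in the \emph{increasing} direction: the property as stated describes monotone decrease once a one-step decrease is observed, but the symmetric statement (monotone increase given a one-step increase, converging to the same unique fixed point) is exactly what was tacitly used in the proof of Theorem~\ref{thm:removing_sufficient}, and I would invoke it in the same way. Aside from that, the argument is a straightforward dualization of Theorem~\ref{thm:adding_sufficient}'s proof, so no new technical machinery is needed.
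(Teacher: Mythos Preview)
Your proposal is correct and follows essentially the same approach as the paper, which simply states that the proof is obtained from the proof of Theorem~\ref{thm:adding_sufficient} together with contradiction. You have spelled out precisely what that entails, including the symmetric use of Property~1 in the increasing direction, which the paper also relies on tacitly.
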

\begin{proof} 
The proof is easily obtained
by the proof of Theorem~\ref{thm:adding_sufficient} and contradiction.
\end{proof}

From an algorithmic standpoint, Corollary~\ref{thm:adding_necessary}
and Corollary~\ref{thm:removing_necessary} complement
Theorem~\ref{thm:adding_sufficient} and
Theorem~\ref{thm:removing_sufficient}. Namely, as the two corollaries
provide necessary conditions, they can be used in an algorithm in
order to conclude link adjustments that will not improve the performance.

The derived conditions do not necessarily require to
account for all cells and UEs. This fact potentially enables to
examine the conditions to a cell and its local environment. Suppose
we add a link for cell $v$ and UE $u$. For any subset $\breve{\mathcal{I}}
\subset \I$ with $v\in\breve{\I}$, and the relevant set of UEs
$\breve{\mathcal{J}}=\bigcup_{i\in\breve{\mathcal{I}}}\mathcal{J}_i$, consider the
sufficient condition in Theorem~\ref{thm:adding_sufficient}. We argue
that the condition can be used, even if we fix the load of all cells
in $\mathcal{I}\backslash\breve{\mathcal{I}}$ and perform the so called
asynchronous fixed-point iteration~\cite{Yates:1995eh} to function
$\bm{f}\circ\bm{h}^{+}(\bm{x})$ for the load of cells in
$\breve{\mathcal{I}}$, as formalized below.

\begin{theorem}
Consider the sequence $\breve{\bm{x}}^{(0)}$, $\breve{\bm{x}}^{(1)}$, $\dots$, 
generated by the following asynchronous fixed-point iterations. 
\begin{enumerate}
\item $\breve{\bm{x}}^{(0)}=\tilde{\bm{x}}$.
\item For any $i\in\mathcal{I}\backslash\breve{\mathcal{I}}$ and $t\geq 1$, 
$\breve{x}_i^{(t)}=\breve{x}_i^{(0)}$.
\item For any $i\in\breve{\mathcal{I}}$ and $t\geq 1$, $\breve{x}_i^{(t)}=
f_i\circ\bm{h}^{+}(\breve{\bm{x}}^{(t-1)})$.
\end{enumerate}
Then $\overplus{\bm{x}}\leq\tilde{\bm{x}}$ if for some iteration $k\geq 1$ we have 
$f_v^{+}\circ\bm{h}^{+}(\breve{\bm{x}}^{(k)}) < \breve{x}_v^{(k)}$. 
\label{thm:local}
\end{theorem}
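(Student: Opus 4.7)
The plan is to mirror the argument of Theorem~\ref{thm:adding_sufficient} while accommodating the asynchronous character of the iteration. Since $\bm{f}^{+}\circ\bm{h}^{+}$ is an SIF by Lemma~\ref{lma:sif}, Property~1 tells us that any point $\bm{x}^{*}$ satisfying $\bm{f}^{+}\circ\bm{h}^{+}(\bm{x}^{*}) \leq \bm{x}^{*}$ initiates a monotonically decreasing iteration converging to the unique fixed point $\overplus{\bm{x}}$. My goal will therefore be to exhibit such a dominating point that is itself componentwise $\leq \tilde{\bm{x}}$; I will argue that $\breve{\bm{x}}^{(k)}$ serves this purpose.

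First, I would establish by induction on $t$ that $\breve{\bm{x}}^{(t)} \leq \breve{\bm{x}}^{(t-1)} \leq \tilde{\bm{x}}$. For the base step, note that $\bm{h}^{+}(\tilde{\bm{x}}) \geq \bm{h}(\tilde{\bm{x}})$ (strict for UE $u$), so the antitonicity of $\bm{f}$ in $\bm{\gamma}$ from Lemma~\ref{lma:gamma_x} gives $f_i\circ\bm{h}^{+}(\tilde{\bm{x}}) \leq f_i\circ\bm{h}(\tilde{\bm{x}}) = \tilde{x}_i$ for each $i \in \breve{\mathcal{I}}$; the coordinates outside $\breve{\mathcal{I}}$ are pinned to $\tilde{x}_i$ by construction. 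The inductive step combines antitonicity of $\bm{h}^{+}$ in $\bm{x}$ with antitonicity of $\bm{f}$ in $\bm{\gamma}$ to propagate the decrease on $\breve{\mathcal{I}}$, while the frozen coordinates preserve monotonicity trivially. Next, at the specific iteration $k$ for which $f_v^{+}\circ\bm{h}^{+}(\breve{\bm{x}}^{(k)}) < \breve{x}_v^{(k)}$, I would verify $\bm{f}^{+}\circ\bm{h}^{+}(\breve{\bm{x}}^{(k)}) \leq \breve{\bm{x}}^{(k)}$ componentwise: cell $v$ is covered by hypothesis; for $i \in \breve{\mathcal{I}}\setminus\{v\}$, $f^{+}_i = f_i$ and the quantity computed coincides with $\breve{x}_i^{(k+1)} \leq \breve{x}_i^{(k)}$; for $i \notin \breve{\mathcal{I}}$, I use $\breve{\bm{x}}^{(k)} \leq \tilde{\bm{x}}$ together with $\bm{h}^{+} \geq \bm{h}$ to conclude $f_i\circ\bm{h}^{+}(\breve{\bm{x}}^{(k)}) \leq f_i\circ\bm{h}(\tilde{\bm{x}}) = \tilde{x}_i = \breve{x}_i^{(k)}$. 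Property~1 applied to the synchronous iterates of $\bm{f}^{+}\circ\bm{h}^{+}$ initialized at $\breve{\bm{x}}^{(k)}$ then yields $\overplus{\bm{x}} \leq \breve{\bm{x}}^{(k)} \leq \tilde{\bm{x}}$.

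The main obstacle I expect lies in the frozen coordinates $i \notin \breve{\mathcal{I}}$: unlike in Theorem~\ref{thm:adding_sufficient}, the asynchronous iteration does not update these cells, so one cannot invoke a descent step at those indices directly. The resolution, sketched above, is to lean on the global inequality $\breve{\bm{x}}^{(k)} \leq \tilde{\bm{x}}$ together with the fixed-point identity $\bm{f}\circ\bm{h}(\tilde{\bm{x}}) = \tilde{\bm{x}}$, which transfers the decrease back to the frozen cells when evaluating $\bm{f}^{+}\circ\bm{h}^{+}$ at $\breve{\bm{x}}^{(k)}$. Once this bridging inequality is in place, the rest reduces to a routine application of the SIF machinery already developed.
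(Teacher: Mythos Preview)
Your proposal is correct and follows essentially the same route as the paper: establish monotone decrease of the asynchronous iterates $\breve{\bm{x}}^{(t)}$, then launch the synchronous iteration of the SIF $\bm{f}^{+}\circ\bm{h}^{+}$ from $\breve{\bm{x}}^{(k)}$ and invoke Property~1 to conclude $\overplus{\bm{x}}\leq\breve{\bm{x}}^{(k)}\leq\tilde{\bm{x}}$. Your explicit handling of the frozen coordinates $i\notin\breve{\mathcal{I}}$ via $\breve{\bm{x}}^{(k)}\leq\tilde{\bm{x}}$ together with $\bm{h}^{+}\geq\bm{h}$ is in fact more careful than the paper's argument, which treats all $i\neq v$ uniformly without separating this case.
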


\begin{proof}
Suppose there is an iterate $\breve{\bm{x}}^{(k)}$, for some $k \geq
1$ such that
$f_v^+\circ\bm{h}^{+}(\breve{\bm{x}}^{(k)})<\breve{x}_v^{(k)}$.  By
Lemma~\ref{lma:x_bounds}, we have
$f_i\circ\bm{h}^{+}(\breve{\bm{x}}^{(0)})\leq
f_i\circ\bm{h}(\breve{\bm{x}}^{(0)})$, which leads to
$\breve{x}_i^{(1)}\leq\breve{x}_i^{(0)}$, for any
$i\in\breve{\mathcal{I}}$.  Combined with condition 2) that
$\breve{x}_i^{(1)}=\breve{x}_i^{(0)}$ for any
$i\in\mathcal{I}\backslash\breve{\mathcal{I}}$, we conclude that
$\breve{\bm{x}}^{(1)}\leq\breve{\bm{x}}^{(0)}$. For any
$i\in\breve{\mathcal{I}}$, by applying Property~1 on
$\breve{x}_i^{(2)}$, we obtain
$\breve{x}_i^{(2)}\leq\breve{x}_i^{(1)}$.  Again, combine the result with
condition 2), and repeat the procedure for the remaining iterations, one can conclude
\begin{equation}
\breve{\bm{x}}^{(k+1)}\leq
\breve{\bm{x}}^{(k)}\leq\breve{\bm{x}}^{(k-1)}\leq\cdots\leq\breve{\bm{x}}^{(1)}
\leq\breve{\bm{x}}_i^{(0)}
\end{equation}

Suppose we apply regular (instead of asynchronous) fixed-point
iterations to $\bm{f}^+\circ\bm{h}^{+}(\bm{x})$, using
$\breve{\bm{x}}^{(k)}$ as the initial point. This gives a sequence
$\overplus{\bm{x}}^{(0)}$, $\overplus{\bm{x}}^{(1)}$, $\dots$, with
$\overplus{\bm{x}}^{(0)} = \breve{\bm{x}}^{(k)}$. For cell $v$,
$\overplus{x}_v^{(1)} < \overplus{x}_v^{(0)} = \breve{x}_v^{(k)}$,
because $f_v^+\circ\bm{h}^{+}(\breve{\bm{x}}^{(k)})<\breve{x}_v^{(k)}$
and $\overplus{x}_v^{(1)} =
f_v^+\circ\bm{h}^{+}(\breve{\bm{x}}^{(k)})$. For any cell $i \not=v$,
note that $f^+_i \circ \bm{h}^+ (\bm{x}) = f_i \circ
\bm{h}^+ (\bm{x}),  \bm{x} \in \mathbb{R}^n_+$.  Hence $f^+_i
\circ \bm{h}^+ (\overplus{\bm{x}}^{(0)}) = f_i \circ \bm{h}^+
(\overplus{\bm{x}}^{(0)})$.  Moreover, $f^+_i \circ \bm{h}^+
(\overplus{\bm{x}}^{(0)}) = \overplus{x}_i^{(1)}$, and $f_i \circ
\bm{h}^+ (\overplus{\bm{x}}^{(0)}) = \breve{x}_i^{(k)}$.  Thus
$\overplus{x}_i^{(1)} = \breve{x}_i^{(k)}$, and by construction
$\overplus{x}_i^{(0)} = \breve{x}_i^{(k)}$. In conclusion,
$\overplus{\bm{x}}^{(1)} \leq \overplus{\bm{x}}^{(0)} =
\breve{\bm{x}}^{(k)}$.  The conclusion and Property~1 lead to
$\overplus{\bm{x}} \leq
\tilde{\bm{x}}$. 
\end{proof}

\subsection{Algorithm Design}
\label{sec:design}

In Algorithm~\ref{alg:AL} and Algorithm~\ref{alg:RL}, we 
apply the conditions in Section~\ref{sec:condition}
to examine possible load reduction by one link adjustment.
Since the steps of two algorithms are similar to each other, 
we omit some details in the description of Algorithm~\ref{alg:RL}.

\begin{algorithm}[h!]
\KwIn{
$\tilde{\bm{x}}$,
$\tilde{\bm{\gamma}}$,
$\tilde{\bm{\kappa}}~(\tilde{\kappa}_{vu}=0)$, 
$\langle v,u\rangle$
}
\KwOut{$\bm{\kappa}$ or $\tilde{\bm{\kappa}}$}
$\bm{\kappa}\leftarrow\tilde{\bm{\kappa}}$; $\kappa_{vu}\leftarrow{1}$\;
$\bm{x}^{(0)}\leftarrow\tilde{\bm{x}}$; 
$\tilde{\bm{\gamma}}^{(0)}\leftarrow\tilde{\bm{\gamma}}$\; 
\For{$t\leftarrow 1$ \textnormal{\textbf{to}} $\tau$}
{
    $\bm{x}^{(t)}\leftarrow
    \bm{f}\big(\bm{h}(\bm{x}^{(t-1)},\bm{\kappa}),\tilde{\bm{\kappa}}\big)$\; \label{Line:x2}
    $\bm{\gamma}^{(t)}\leftarrow
    \bm{h}\big(\bm{f}(\bm{\gamma}^{(t-1)},\tilde{\bm{\kappa}}),\bm{\kappa}\big)$\; \label{Line:gamma2}
    \uIf{$f_v(\bm{h}(\bm{x}^{(t)},\bm{\kappa}),\bm{\kappa})\leq x_v^{(t)}$
\label{Line:alsuff}}  
    {
        \KwRet{$\bm{\kappa}$}\;
    }
    \uElseIf{${}h_u(\bm{f}(\bm{\gamma}^{(t)},\bm{\kappa}),\bm{\kappa})\leq\gamma_u^{(t)}$
\label{Line:alness}} 
    {
        \KwRet{$\tilde{\bm{\kappa}}$}\;

    }
}
\KwRet{$\tilde{\bm{\kappa}}$}\;
\caption{Link adjustment of adding one association.} 
\label{alg:AL}
\end{algorithm}

\begin{algorithm}[h!]
\KwIn{
$\tilde{\bm{x}}$,
$\tilde{\bm{\gamma}}$,
$\tilde{\bm{\kappa}}~(\tilde{\kappa}_{vu}=1)$, 
$\langle v,u\rangle$
}
\KwOut{$\bm{\kappa}$ or $\tilde{\bm{\kappa}}$}
The algorithm flow follows that of Algorithm~\ref{alg:AL}, with
$\kappa_{vu}\leftarrow 1$ changed to $\kappa_{vu}\leftarrow 0$ in Line
1, and adaptation of Lines 4, 5, 6 and 8 to the conditions
in Theorem~\ref{thm:removing_sufficient} and
Corollary~\ref{thm:removing_necessary}.
\caption{Link adjustment of removing one association.}
\label{alg:RL}
\end{algorithm}

The input of Algorithm~\ref{alg:AL} consists of an association
$\tilde{\bm{\kappa}}$, the load $\tilde{\bm{x}}$ and the SINR
$\tilde{\bm{\gamma}}$ for association $\tilde{\bm{\kappa}}$, and a
cell-UE pair $\langle v,u \rangle$, for which $\tilde{\kappa}_{vu}=0$ and
hence it is subject to the consideration of adding the association.
Parameter $\tau$ is pre-defined, and denotes the maximum number of
iterations of applying the conditions in
Theorem~\ref{thm:adding_sufficient} and
Corollary~\ref{thm:adding_necessary}.
Lines~\ref{Line:alsuff} and~\ref{Line:alness} apply, respectively, the
conditions provided in Theorem~\ref{thm:adding_sufficient} and
Corollary~\ref{thm:adding_necessary}.  If the outcome of
Line~\ref{Line:alsuff} turns out to be true, then adding the association
$\langle v,u \rangle$ improves the load of all cells by
Theorem~\ref{thm:adding_sufficient}, and Algorithm~\ref{alg:AL}
updates the cell-UE association to include this pair. If
Line~\ref{Line:alness} holds true, then the necessary condition for load
improvement in Corollary~\ref{thm:adding_necessary} is not met.
Consequently $\langle v,u\rangle$ is rejected and the original
association $\tilde{\bm{\kappa}}$ with $\tilde{\kappa}_{vu}=0$ is
kept. This association is also kept in case neither the sufficient nor
the necessary conditions is met. 
Note that in considering the removal of a link, the conditions
in Theorem~\ref{thm:removing_sufficient} 
and Corollary~\ref{thm:removing_necessary} are used 
in Algorithm~\ref{alg:RL}.

\begin{algorithm}[tb]
\KwIn{$\bm{\kappa}^{init}$}
\KwOut{$\bm{\kappa}^{opt}$}
$\tilde{\bm{\kappa}}\leftarrow\bm{\kappa}^{init}$\;
$\tilde{\bm{x}} \leftarrow \textnormal{Fixed point of }
\bm{f}(\bm{h}(\bm{x},\tilde{\bm{\kappa}}),\tilde{\bm{\kappa}})$\;
$\tilde{\bm{\gamma}} \leftarrow \bm{h}(\tilde{\bm{x}},\tilde{\bm{\kappa}})$\; 
\Repeat{$\lambda=0~\vee~\urcorner\mathsf{flag}$}{
    $\lambda\leftarrow\lambda-1$; $\mathsf{flag}\leftarrow$ false
    \;\label{Line:flag_init}
    \For{$ u\in\mathcal{I}$, $v\in\mathcal{J}$}
    {
        $\eta\leftarrow\tilde{\kappa}_{vu}$\;
        \eIf{$\eta=0$}{
            $\tilde{\bm{\kappa}}\leftarrow
            \textnormal{Algorithm~\ref{alg:AL} with~} (\tilde{\bm{x}},\tilde{\bm{\gamma}},\tilde{\bm{\kappa}},\langle v,u \rangle)$\;
        }{
             $\tilde{\bm{\kappa}}\leftarrow
             \textnormal{Algorithm~\ref{alg:RL} with~} (\tilde{\bm{x}},\tilde{\bm{\gamma}},\tilde{\bm{\kappa}},\langle v, u\rangle)$\;
        }
        \If{$\tilde{\kappa}_{vu}\neq\eta$
\label{Line:flag_update}}{
            $\mathsf{flag}\leftarrow$ true \;
            $\tilde{\bm{x}} \leftarrow \textnormal{Fixed point of } 
            \bm{f}(\bm{h}(\bm{x},\tilde{\bm{\kappa}}),\tilde{\bm{\kappa}})$\; \label{Line:fp}
            $\tilde{\bm{\gamma}} \leftarrow \bm{h}(\tilde{\bm{x}},\tilde{\bm{\kappa}})$\;  
        }
    }
}
$\bm{\kappa}^{opt}\leftarrow\tilde{\bm{\kappa}}$\;
$\bm{x}^{opt}\leftarrow\tilde{\bm{x}}$ 
\caption{Minimization of Load (\textit{MinL})} 
\label{alg:MinL}
\end{algorithm}

The overall algorithm, named \textit{MinL}, is presented in
Algorithm~\ref{alg:MinL}. The outer loop runs for $\lambda$ rounds, where
$\lambda$ is a pre-defined parameter.  The inner loop goes through all the
cell-UE pairs.  For each pair, Algorithm~\ref{alg:AL} or~\ref{alg:RL} is
utilized to evaluate the corresponding link adjustment.  The indicator 
$\mathsf{flag}$, is initialized to be false, at the beginning of each outer loop
round. The outcome of
Algorithm~\ref{alg:AL} and~\ref{alg:RL} is examined in
Line~\ref{Line:flag_update}. If there is an improvement, the load and SINR of
the updated association $\tilde{\bm{\kappa}}$ are computed, and $\mathsf{flag}$
is set to be true, before moving to the next cell-UE pair. For
each outer loop round, if there is no improvement throughout all its inner loop
rounds, then $\mathsf{flag}$ is not updated and kept false, which, causes the
outer loop to terminate.

As for algorithm complexity, we observe first
that the complexity of Algorithm~\ref{alg:AL} and
Algorithm~\ref{alg:RL}, in terms of evaluating fixed-point solutions,
is clearly of $O(\tau m +
\tau n)$. Recall that $m$ and $n$ are the numbers of users and cells,
respectively. In Algorithm~\ref{alg:MinL}, Algorithm~\ref{alg:AL} (or
Algorithm~\ref{alg:RL}) is invoked $\lambda mn$ times at maximum.
Also, the fixed point evaluations in Algorithm~\ref{alg:MinL} in
Line~\ref{Line:fp} are executed $\lambda mn$ times at most.  The
complexity of the loop in Algorithm~\ref{alg:MinL} is thus $O(\lambda
mn)\times O(\tau m + \tau n + Kn)=O(\lambda \tau m^2 n+\lambda\tau mn^2
+ \lambda mn^2K)$, where $K$ is the number of fixed-point iterations per
evaluation. For standard interference functions, the fixed-point
method has been shown to have geometric convergence and does not generate 
the issue of
computational bottleneck (see~\cite{Huang:1998uc}).  
Moreover, out of the loop of Algorithm~4, in Line 2, we obtain the fixed point 
of the load-coupling function under the initial association.
Therefore, the overall complexity of Algorithm~\ref{alg:MinL}
is $O(Kn+\lambda \tau m^2 n+\lambda\tau mn^2 + \lambda
mn^2K)$=$O(\lambda \tau m^2 n+\lambda\tau mn^2 + \lambda mn^2K)$, which is
polynomial in the input size, the number of fixed-point solutions, and
algorithm control parameters. Hence the algorithm is scalable.

We remark that \textit{MinL}, presented in Algorithm~\ref{alg:MinL},
suits well for distributed implementation. By Theorem~\ref{thm:local},
by means of asynchronous fixed-point iterations, the results of
Theorem~\ref{thm:adding_sufficient},
Theorem~\ref{thm:removing_sufficient},
Corollary~\ref{thm:adding_necessary}, and
Corollary~\ref{thm:removing_necessary}, apply also to any subset of
cells.  Therefore, for Algorithm~\ref{alg:MinL}, a distributed
implementation amounts to, for any specific association of cell $v$
and UE $u$, examining the partial optimality conditions for a cell subset
$\breve{\mathcal{I}}$ with $v \in
\breve{\mathcal{I}}$ (cf.~Theorem~\ref{thm:removing_sufficient}).
Such an implementation is hence straightforward in concept. Here, subset
$\breve{\mathcal{I}}$ is naturally formed by the neighboring cells of $v$,
because more remote cells have virtually no significance in
interference. Thus, the distributed implementation only requires
information exchange that is local to a cell and hence easily managed
via the X2 interface.

\section{Lower Bounds of Global Optimum}
\label{sec:bounds}

In this section, we show how to derive lower bounds of the global
optimum, for both \MSL{} and \MML{}, based on the linearization
in Section~\ref{sec:linear}. Such lower bounds provide effective means
to gauge of the performance of the solution approaches 
presented in Sections~\ref{sec:linear} and~\ref{sec:minl}.

Consider the linear approximation derived using the two end points $0$ and
$T_{\ell, j}$, where $j$ and $\ell\in\mathcal{L}_j$ are the UE and its set of
serving cells under consideration, and $T_{\ell, j}$ is defined in
Eq.~(\ref{eq:big_T}). The linear approximation is
illustrated in \figurename~\ref{fig:lower_bound}. Similar to
\figurename~\ref{fig:linear_approximation}, the specific values of the axis are not shown
in \figurename~\ref{fig:lower_bound} for the sake of generality. Constructing the linear approximation for
all UEs and their candidate serving cells, the resulting MILP
formulations \textit{S--MILP} and
\textit{M--MILP} will provide lower bounds of the global optimum of \MSL{} and
\MML{}, respectively. This is simply because, no matter the association, the
interference is always within the interval used for constructing the linear
function, and the load calculated by the linear function is an underestimation
of the load for the entire interval.

The gap between the original, nonlinear load function and the linear
function as constructed above can be very significant, as can be seen
in \figurename~\ref{fig:lower_bound}. To overcome the issue, we derive
a better approximation by reducing the domain of the interference
variable $w_{\ell,j}$, $\ell\in\mathcal{L}_j,j\in\mathcal{J}$, with the theoretical guarantee that
the interference, if UE $j$ is served by cells $\ell$ at the global
optimum, will indeed fall within the reduced domain. Consequently, the
linear function derived by the reduced interference domain remains an
underestimation of the load of the global optimum, whereas the quality
of the approximation is improved. To this end, we first derive lower
and upper bounds for the load vector $\bm{x}$ in
Lemma~\ref{lma:bound_load} below.

\begin{figure}[h!]
\centering
\includegraphics[width=\linewidth]{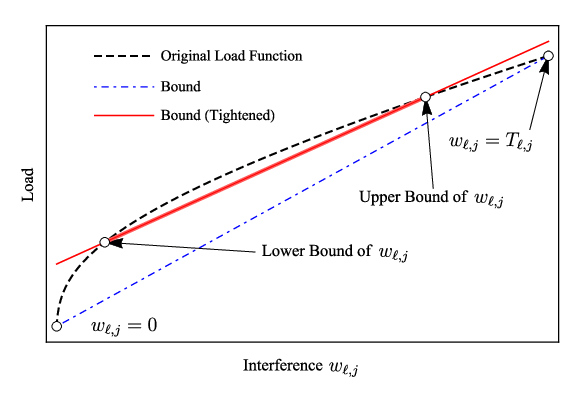}
\vspace{-5mm}
\caption{The lower bound can be improved by reducing the range of relevance 
for the interference variable.}
\label{fig:lower_bound}
\end{figure}
\begin{figure*} 
\centering 
\subfigure[Initial Association]{\includegraphics[width=0.4\linewidth]{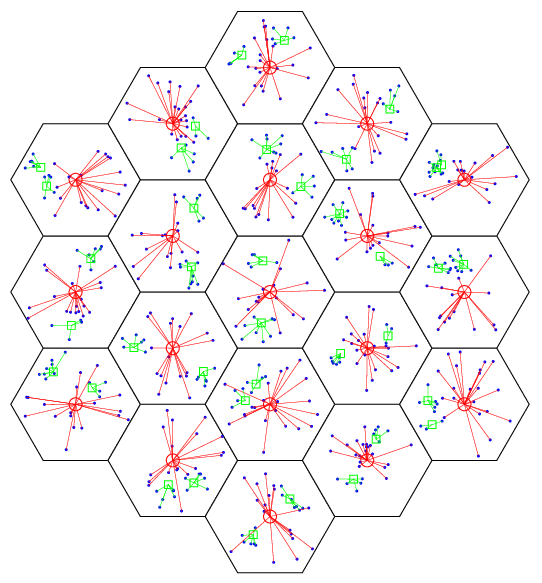}}
\quad\quad\quad 
\subfigure[Optimized Association]{\includegraphics[width=0.4\linewidth]{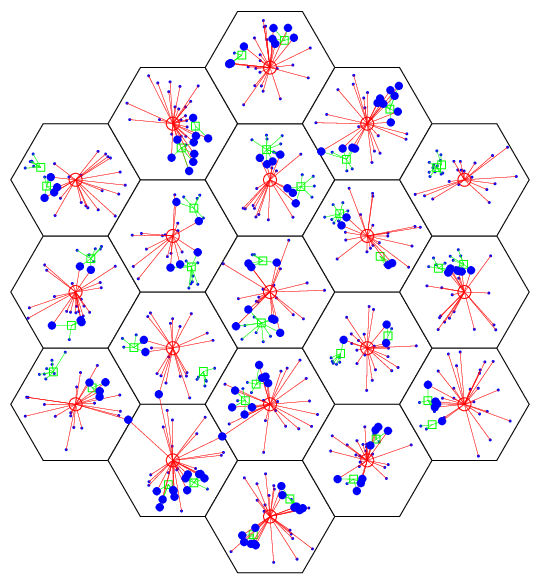}} 
\caption{An example of the HetNet layout, and initial and the optimized
association. Red circles in the center of each hexagon denote
the MC\@. Green rectangles denote the SC\@. The UEs are denoted by 
blue dots. \figurename~\ref{fig:layout}(a) shows the network layout and initial association 
, where each UE is associated only to its home cell. 
\figurename~\ref{fig:layout}(b) shows the optimized association, where UEs served by
multiple cells using JT are marked by thicker blue dots.} 
\label{fig:layout} 
\end{figure*}

\subsection{Bounds of Cell Load}

\begin{lemma}
Define two matrices $\hat{\bm{\kappa}}$ and $\check{\bm{\kappa}}$.
For $\hat{\bm{\kappa}}$, entry $\hat{\kappa}_{ij}=1$ if $i=c_j$, 
and zero otherwise. For $\check{\bm{\kappa}}$,
entry $\check{\kappa}_{ij}=1$ if $j\in\mathcal{J}^{+}_i$,
and zero otherwise.
Denote by $\undercirc{\bm{x}}$ and $\overcirc{\bm{x}}$ the fixed points
of ${\bm{x}}=\bm{f}(\bm{h}({\bm{x}},\check{\bm{\kappa}}),
\hat{\bm{\kappa}})$ and ${\bm{x}}=\bm{f}(\bm{h}({\bm{x}},\hat{\bm{\kappa}}),
\check{\bm{\kappa}})$, respectively.
For any cell-UE association, the incurred cell load vector is bounded
from below and above, respectively, by $\undercirc{\bm{x}}$ and
$\overcirc{\bm{x}}$. 

\label{lma:bound_load}
\end{lemma}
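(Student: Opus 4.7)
The plan is to sandwich the load-coupling map $\bm{f}(\bm{h}(\cdot,\bm{\kappa}),\bm{\kappa})$ pointwise between two extremal maps and then obtain the two bounds from Property~1 applied to each map in turn. The key observation is that every feasible $\bm{\kappa}$ satisfies $\hat{\bm{\kappa}}\leq\bm{\kappa}\leq\check{\bm{\kappa}}$ entrywise, because each UE must be served by at least its home cell and at most by its candidate set. Inspecting Eq.~(\ref{eq:sinr}), enlarging the serving set of a UE adds positive terms to the numerator of $h_j$ and removes those same terms from the denominator, so $\bm{h}(\bm{x},\bm{\kappa})$ is entrywise non-decreasing in $\bm{\kappa}$. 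Inspecting Eq.~(\ref{eq:load}), enlarging the set of UEs assigned to a cell adds non-negative summands to $f_i$, so $\bm{f}(\bm{\gamma},\bm{\kappa})$ is entrywise non-decreasing in $\bm{\kappa}$. Since $\bm{f}$ is non-increasing in $\bm{\gamma}$ by Lemma~\ref{lma:gamma_x}, combining these observations yields, for every $\bm{x}\in\mathbb{R}^n_+$, the chain
\[
\bm{f}(\bm{h}(\bm{x},\check{\bm{\kappa}}),\hat{\bm{\kappa}})
\;\leq\; \bm{f}(\bm{h}(\bm{x},\bm{\kappa}),\bm{\kappa})
\;\leq\; \bm{f}(\bm{h}(\bm{x},\hat{\bm{\kappa}}),\check{\bm{\kappa}}),
\]
whose outer maps are exactly those whose unique fixed points define $\undercirc{\bm{x}}$ and $\overcirc{\bm{x}}$.

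Let $\bm{x}^{*}$ denote the unique load vector induced by $\bm{\kappa}$, i.e.\ the fixed point of $\bm{x}=\bm{f}(\bm{h}(\bm{x},\bm{\kappa}),\bm{\kappa})$, whose existence and uniqueness follow from Lemma~\ref{lma:sif} and Property~1. For the lower bound, I evaluate the sandwich at $\bm{x}^{*}$ to obtain $\bm{f}(\bm{h}(\bm{x}^{*},\check{\bm{\kappa}}),\hat{\bm{\kappa}})\leq\bm{x}^{*}$. The left-hand map is an SIF by exactly the reasoning of Lemma~\ref{lma:sif}, which depends only on the concave structural form of $\bm{f}$ and $\bm{h}$ and not on the particular association. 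Starting the fixed-point iteration of this SIF at $\bm{x}^{*}$ gives a first iterate no larger than $\bm{x}^{*}$, and Property~1 then forces the sequence to decrease monotonically to its unique fixed point $\undercirc{\bm{x}}$, yielding $\undercirc{\bm{x}}\leq\bm{x}^{*}$. For the upper bound, I apply the symmetric argument in the opposite direction: evaluating the sandwich at $\overcirc{\bm{x}}$ yields $\bm{f}(\bm{h}(\overcirc{\bm{x}},\bm{\kappa}),\bm{\kappa})\leq\overcirc{\bm{x}}$, and starting the fixed-point iteration of the original SIF $\bm{f}(\bm{h}(\cdot,\bm{\kappa}),\bm{\kappa})$ from $\overcirc{\bm{x}}$ then produces, by Property~1, a monotonically decreasing sequence converging to $\bm{x}^{*}$, so that $\bm{x}^{*}\leq\overcirc{\bm{x}}$.

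The main obstacle is making the monotonicity of the composite map in $\bm{\kappa}$ airtight: $\bm{\kappa}$ appears in the numerator of $\bm{h}$, in the denominator of $\bm{h}$, and in the summation range of $\bm{f}$, and once one accounts for the fact that $\bm{f}$ is \emph{decreasing} in $\bm{\gamma}$ these effects do not uniformly push the load in the same direction. The reason the two-sided sandwich is nonetheless achievable is that taking $\check{\bm{\kappa}}$ inside $\bm{h}$ and $\hat{\bm{\kappa}}$ inside $\bm{f}$ aligns both effects in the load-reducing direction simultaneously, and symmetrically for the other extreme. A minor subtlety is that Property~1 as stated covers only the descending case of monotone SIF convergence, which is why I arrange to apply it twice in the downward direction—from $\bm{x}^{*}$ for the lower bound, and from $\overcirc{\bm{x}}$ for the upper bound—rather than attempting an ascending iteration.
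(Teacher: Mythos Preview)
Your proof is correct and follows essentially the same approach as the paper: both establish the pointwise sandwich $\bm{f}(\bm{h}(\bm{x},\check{\bm{\kappa}}),\hat{\bm{\kappa}})\leq\bm{f}(\bm{h}(\bm{x},\bm{\kappa}),\bm{\kappa})\leq\bm{f}(\bm{h}(\bm{x},\hat{\bm{\kappa}}),\check{\bm{\kappa}})$ via the monotonicity of $\bm{h}$ and $\bm{f}$ in the association and in $\bm{\gamma}$, respectively. You are in fact more careful than the paper on the final step: the paper simply asserts that the pointwise inequality ``establishes the validity'' of the bound, whereas you explicitly close the gap by invoking Property~1 via descending SIF iterations started from $\bm{x}^{*}$ (for the lower bound) and from $\overcirc{\bm{x}}$ (for the upper bound), and you correctly note that the mixed maps remain SIFs because Lemma~\ref{lma:sif} depends only on the structural form of $\bm{f}$ and $\bm{h}$.
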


\begin{proof}
Consider first $\undercirc{\bm{x}}$.  For any given load vector
$\bm{x}$ and an association vector $\bm{\kappa}$, it can be verified
that
$\bm{h}(\bm{x},\check{\bm{\kappa}})\geq\bm{h}(\bm{x},\bm{\kappa})$.
This is because, for any UE $j\in\mathcal{J}$, all the candidate cells
in $\mathcal{I}^{+}_j$ are serving $j$ in $\check{\bm{\kappa}}$.
Moreover, for any $\bm{\gamma}$,
$\bm{f}(\bm{\gamma},\hat{\bm{\kappa}})\leq\bm{f}(\bm{\gamma},\bm{\kappa})$,
because for any cell $i\in\mathcal{I}$, the load $x_i$ originates only
from serving the UEs in $\mathcal{J}_i^-$, for which $i$ is the home
cell. Hence
$\bm{f}(\bm{h}(\bm{x},\check{\bm{\kappa}}),\hat{\bm{\kappa}})\leq\bm{f}\left(
\bm{h}\left(\bm{x},\bm{\kappa}\right),\bm{\kappa}\right)$ holds for any
$\bm{\kappa}$ and $\bm{x}$. This establishes the validity of
$\undercirc{\bm{x}}$ as lower bound of cell load.  The validity of
$\overcirc{\bm{x}}$ as upper bound can be proven analogously, by
observing that
$\bm{h}(\bm{x},\hat{\bm{\kappa}})\leq\bm{h}(\bm{x},\bm{\kappa})$ and
$\bm{f}(\bm{\gamma},\check{\bm{\kappa}})\geq\bm{f}(\bm{\gamma},{\bm{\kappa}})$,
which together lead to
$\bm{f}(\bm{h}(\bm{x},\hat{\bm{\kappa}}),\check{\bm{\kappa}})\geq\bm{f}(\bm{h}
(\bm{x},\bm{\kappa}),\bm{\kappa})$.
\end{proof}

Intuitively, the lower bound $\undercirc{\bm{x}}$ given by
Lemma~\ref{lma:bound_load} is rather loose, because it is derived by assuming
the most optimistic interference conditions which are fully unrealistic.
Nevertheless, the two bounds on $\bm{x}$ can be utilized together to reduce the
domain of interest for the interference variables, and thus improving the
approximation as discussed earlier.  One can refer to
\figurename~\ref{fig:lower_bound} for an illustration. Below, theoretical
insights are given, ensuring that the strengthened linear approximation remains
valid in delivering a lower bound to the global optimum.



\subsection{Validity of the Improved Lower Bound}


\begin{corollary} 
For any UE $j$ and $\ell\in\mathcal{L}_j$, 
$w_{\ell,j}\in[\undercirc{W}_{\ell,j},\overcirc{W}_{\ell,j
}]$, with
$\undercirc{W}_{\ell,j}=\sum_{i\in\mathcal{I}\backslash\ell}p_
{i}g_{ij}\undercirc{x}_i$ and
$\overcirc{W}_{\ell,j}=\sum_{i\in\mathcal{I}\backslash\ell}p_{i
}g_{ij}\overcirc{x}_i$.
\label{thm:bound_I} 
\end{corollary}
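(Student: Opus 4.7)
The plan is to derive the claim as an immediate consequence of Lemma~\ref{lma:bound_load} combined with the definition of $w_{\ell,j}$. Recall that, by definition, $w_{\ell,j}=\sum_{i\in\mathcal{I}\backslash\ell}p_{i}g_{ij}x_{i}$, so the interference experienced by UE $j$ (under serving set $\ell$) is a linear function of the cell load vector $\bm{x}$, with nonnegative coefficients $p_{i}g_{ij}\geq 0$.

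First, I would invoke Lemma~\ref{lma:bound_load} to assert that, for \emph{any} feasible association $\bm{\kappa}$, the induced load vector $\bm{x}$ satisfies the componentwise bounds $\undercirc{\bm{x}}\leq\bm{x}\leq\overcirc{\bm{x}}$. In particular, this applies to the load vector corresponding to an association that uses $\ell$ as the serving set of UE $j$.

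Next, since each coefficient $p_{i}g_{ij}$ is nonnegative, the mapping $\bm{x}\mapsto\sum_{i\in\mathcal{I}\backslash\ell}p_{i}g_{ij}x_{i}$ is monotonically nondecreasing in every component. Substituting the componentwise bounds from Lemma~\ref{lma:bound_load} then yields
\[
\sum_{i\in\mathcal{I}\backslash\ell}p_{i}g_{ij}\undercirc{x}_{i}
\;\leq\;
\sum_{i\in\mathcal{I}\backslash\ell}p_{i}g_{ij}x_{i}
\;\leq\;
\sum_{i\in\mathcal{I}\backslash\ell}p_{i}g_{ij}\overcirc{x}_{i},
\]
which is precisely $\undercirc{W}_{\ell,j}\leq w_{\ell,j}\leq\overcirc{W}_{\ell,j}$ by the definitions given in the statement.

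There is no real obstacle here: the entire content of the result has been carried by Lemma~\ref{lma:bound_load}, and what remains is only to observe that a nonnegative linear combination preserves componentwise inequalities. The only thing worth being careful about is ensuring that the same $\bm{x}$ can be used throughout, i.e., that the bounds $\undercirc{\bm{x}}$ and $\overcirc{\bm{x}}$ from Lemma~\ref{lma:bound_load} hold uniformly over all associations including those for which $\ell$ is the serving set of UE $j$; this is already guaranteed because Lemma~\ref{lma:bound_load} is stated for an arbitrary association.
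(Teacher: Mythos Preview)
Your proposal is correct and matches the paper's own proof: the paper simply notes that since $p_i$ and $g_{ij}$ are constants, the result follows directly from Lemma~\ref{lma:bound_load}. Your write-up spells out the monotonicity of the nonnegative linear combination more explicitly, but the argument is the same.
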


\begin{proof} 
Since both $p_i$ and $g_{ij}$ are constant for any $i\in\mathcal{I}$ and
$j\in\mathcal{J}$, this result follows directly from Lemma~\ref{lma:bound_load}. 
\end{proof}

\begin{theorem} 
(\textbf{Lower bound}) The optimum of \textit{S--MILP} is a lower bound of
\MSL{}, with the two constants $s_{\ell,j}$ and $\mu_{\ell,j}$ as defined in 
in Corollary~\ref{thm:bound_I}, and the following additional constraint
\begin{equation}
w_{\ell,j}\geq
\undercirc{W}_{\ell,j}-T(1-\kappa_{\ell,j})\quad
\ell\in\mathcal{L}_j,j\in\mathcal{J}
\label{eq:extra}
\end{equation}
\end{theorem}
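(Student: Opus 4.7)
The plan is to show that an arbitrary feasible solution of \MSL{} can be mapped to a feasible solution of the modified \textit{S--MILP} whose objective value is no larger. Applying this mapping to an optimal \MSL{} solution then yields the desired lower bound, since \textit{S--MILP} is a minimization problem.

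Concretely, I would take an optimal \MSL{} association $\bm{\kappa}^*$ together with its induced load vector $\bm{x}^*$ (the fixed point of the coupling equation). For each UE $j$, let $\ell^*_j\in\mathcal{L}_j$ be the set of cells serving $j$ under $\bm{\kappa}^*$. Build an \textit{S--MILP} candidate by setting $\kappa_{\ell^*_j,j}=1$ and $\kappa_{\ell,j}=0$ for $\ell\neq\ell^*_j$, setting $w_{\ell^*_j,j}=\sum_{i\in\mathcal{I}\backslash\ell^*_j}p_i g_{ij}x^*_i$ equal to the true interference under $\bm{x}^*$ and $w_{\ell,j}=0$ otherwise, and letting $x^S_i$ be determined by Eq.~(\ref{eq:p2}b). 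The assignment constraints (\ref{eq:p2}f)--(\ref{eq:p2}g) and non-negativity (\ref{eq:p2}d) hold by construction. For the active $\ell^*_j$, Corollary~\ref{thm:bound_I} yields $w_{\ell^*_j,j}\in[\undercirc{W}_{\ell^*_j,j},\overcirc{W}_{\ell^*_j,j}]$, so the new constraint~\eqref{eq:extra} reduces to $w_{\ell^*_j,j}\geq\undercirc{W}_{\ell^*_j,j}$ and is met directly; for inactive $\ell$ the $T_{\ell,j}(1-\kappa_{\ell,j})$ term renders both (\ref{eq:p2}c) and~\eqref{eq:extra} vacuous.

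The remaining work is to establish $\bm{x}^S\leq\bm{x}^*$ componentwise, which simultaneously validates (\ref{eq:p2}c) for the active choice and yields the objective inequality. For this I would invoke the chord-under-concave property: since $f_{\ell,j}$ is concave in $w$ (Lemma~\ref{lma:sif}) and $l_{\ell,j}$ is its secant on $[\undercirc{W}_{\ell,j},\overcirc{W}_{\ell,j}]$, we have $l_{\ell,j}(w)\leq f_{\ell,j}(w)$ throughout the interval. Evaluating at $w_{\ell^*_j,j}$ and summing over the UEs served by cell $i$ then gives $x^S_i=\sum_{j:i\in\ell^*_j}l_{\ell^*_j,j}(w_{\ell^*_j,j})\leq\sum_{j:i\in\ell^*_j}f_{\ell^*_j,j}(w_{\ell^*_j,j})=x^*_i$, where the last equality is exactly the load-coupling equation satisfied by $\bm{x}^*$. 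Summing over $i$ delivers the objective comparison.

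The step I would be most careful about is the apparent circularity between (\ref{eq:p2}c) and the definition of $x^S$: the former bounds $w$ in terms of $x^S$, yet $x^S$ is computed from $w$. The resolution is that the inequality $\bm{x}^S\leq\bm{x}^*$ is derived from data fixed by $\bm{x}^*$ alone (via the chord-under-concave argument), so the choice $w_{\ell^*_j,j}=\sum_{i}p_i g_{ij}x^*_i\geq\sum_{i}p_i g_{ij}x^S_i$ automatically enforces (\ref{eq:p2}c) without a cycle. Since this construction works against any feasible $\bm{\kappa}$ of \MSL{} and not only the optimum, the \textit{S--MILP} optimum is a valid global lower bound on \MSL{}.
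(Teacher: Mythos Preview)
Your proposal is correct and follows essentially the same line as the paper's proof: map the optimal \MSL{} association (with its fixed-point load $\bm{x}^*$) to a feasible point of \textit{S--MILP}, invoke the secant-below-concave property of $f_{\ell,j}$ on $[\undercirc{W}_{\ell,j},\overcirc{W}_{\ell,j}]$ (together with Corollary~\ref{thm:bound_I} to place the true interference in that interval), and conclude by minimization. Your write-up is in fact more explicit than the paper's, which does not spell out the feasibility check or the circularity issue you raise about~(\ref{eq:p2}c); your resolution of that issue is the right one.
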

\begin{proof} 
The extra constraint shown in Eq.~(\ref{eq:extra}) guarantees that we have
$w_{\ell,j}\geq\max\{\undercirc{W}_{\ell,j}$,
$\sum_{i\in\mathcal{I}\backslash\ell}p_{i}g_{ij}x_i\}$ for any
$\ell\in\mathcal{L}_j$ and $j\in\mathcal{J}$, if $\kappa_{\ell,j}=1$.  That
is, if $\sum_{i\in\I\backslash\ell}p_{i}g_{ij}x_{i}<\undercirc{W}_{\ell,j}$,
then $w_{\ell,j}$ is equal to its lower bound $\undercirc{W}_{\ell,j}$.
Otherwise, we have $w_{\ell,j}=\sum_{i\in\I\backslash\ell}p_{i}g_{ij}x_{i}$. 

Consider the optimal load solution in \MSL{}. For any UE $j$, suppose its
serving cells set is $\ell\in\mathcal{L}_j$.  By construction of the linear
function, the interference $w_{\ell,j}$ satisfies $l_{\ell,j}(w_{\ell,j})
\leq f_{\ell,j}(w_{\ell,j})$, for interval
$[{\undercirc{W}_{\ell,j}},\overcirc{W}_{\ell,j}]$.  Therefore, for the optimal
solution of \MSL{}, the objective value of \textit{S--MILP} is a lower bound with
the same association. This association solution is not necessarily the optimum
of \textit{S--MILP}.  Because \textit{S--MILP} is minimization, the theorem
follows.
\end{proof}

For \MML{} and its approximation \textit{M--MILP}, the validity of the lower
bound can be proved in the same way.

\section{Performance Evaluation} 
\label{sec:simulation}

\subsection{Simulation Settings}

The network scenario consists in 19 hexagonal regions (see
\figurename~\ref{fig:layout}), each of which has one MC in
the center. In each hexagon, two SCs and thirty UEs are
randomly distributed. The hexagon radius is 500~m. The HetNet
operates at 2~GHz. Each resource block follows the LTE standard of 180
kHz bandwidth and the total bandwidth is 20~MHz~\cite{3gpp}. The
transmit power per resource block for MCs and SCs equal 400~mW and
50~mW, respectively. The noise power spectral density is 
-174~dBm/Hz. The path loss for MCs and SCs follow the standard 3GPP
urban macro (UMa) model and urban micro (UMi) model of hexagonal
deployment, with the shadowing coefficients generated by the log-normal 
distribution with 6~dB and 3~dB standard deviation,
respectively~\cite{Anonymous:GtzCaaVU}. For any UE, we set 3 candidate
cells (MCs and/or SCs) and thus $|\mathcal{I}^{+}_j|=3,~j\in\mathcal{J}$.  
In the MILP-based solutions, the linearization
follows Corollary~\ref{thm:bound_I}.  In \textit{MinL}, the parameters
$\lambda$ and $\tau$ are set to 3 and 5, respectively.  As a baseline
solution, each UE is served only by the home cell (MC or 
SC) with the best received signal power, without any JT\@. 
\textit{MinL} is evaluated with two different initial
associations: 1) the baseline solution, and 2) the association
solution obtained from solving MILPs. The latter is labeled as
``$\textit{S--MILP}$ + \textit{MinL}'' and ``$\textit{M--MILP}$ +
\textit{MinL}''. The simulations run on 
20 data sets, and the 
results are averaged over the data sets.

\subsection{An Illustration of Optimized Association}

An example network layout is shown in \figurename~\ref{fig:layout}.
\figurename~\ref{fig:layout}(a) shows the initial association, and
\figurename~\ref{fig:layout}(b) shows the association optimized by
$\textit{S--MILP}$ + \textit{MinL}. The MCs are marked by red circles, 
the SCs are marked by green rectangles, and the UEs are represented by blue
dots. The red and green lines show the association of MCs and SCs to
UEs, respectively. In \figurename~\ref{fig:layout}(b), the UEs in JT
are marked by thicker dots. One can see from
\figurename~\ref{fig:layout} that most cooperations by JT are between MCs and
SCs, whereas JT of MCs occur for a few UEs on MC edges only.
From the figure, we infer that JT is likely to take place if 
at least two cells provide comparable received powers to a UE\@.
In addition, it can be seen that the number of UEs
in JT varies much over the area, reflecting the heterogeneity of the scenario. 
Also, although not shown in the figure, the simulations indicate that the
number of UEs in JT increases with respect to the demand.

\subsection{Performance Comparison for MinSumL} 
\label{sec:ressuml}

\begin{figure}[h!] 
\centering 
\subfigure[Sum Load (MC+SC)]{\includegraphics[width=\linewidth]{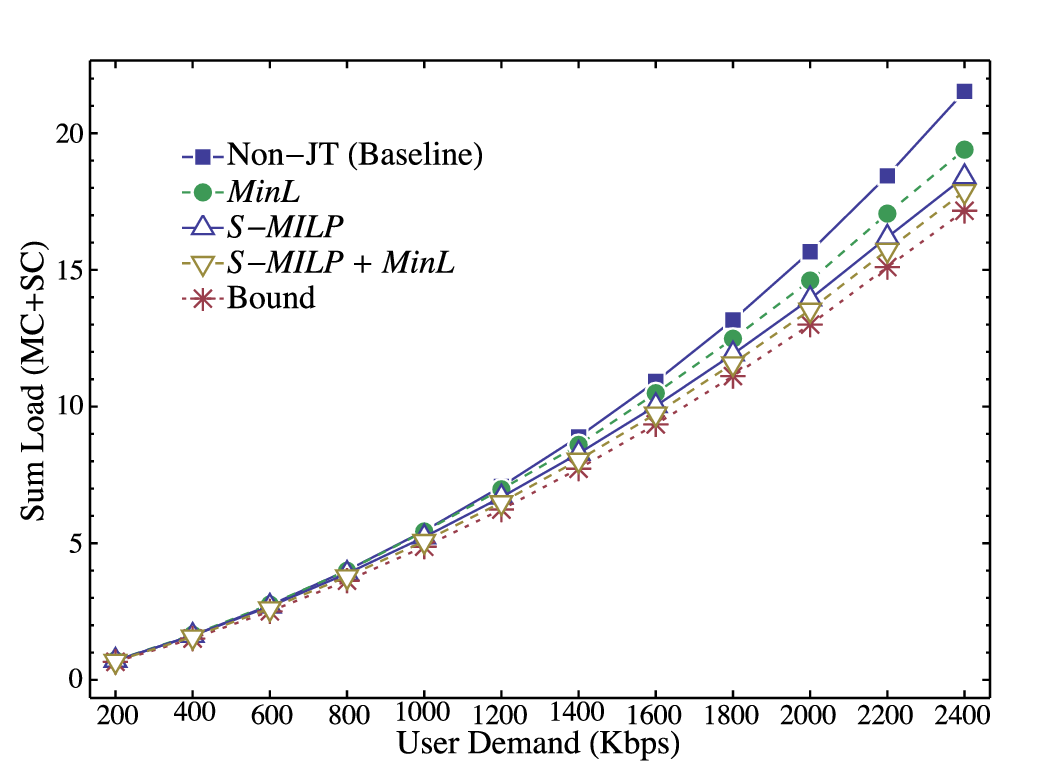}} \\
\subfigure[Sum Load (MC)]{\includegraphics[width=\linewidth]{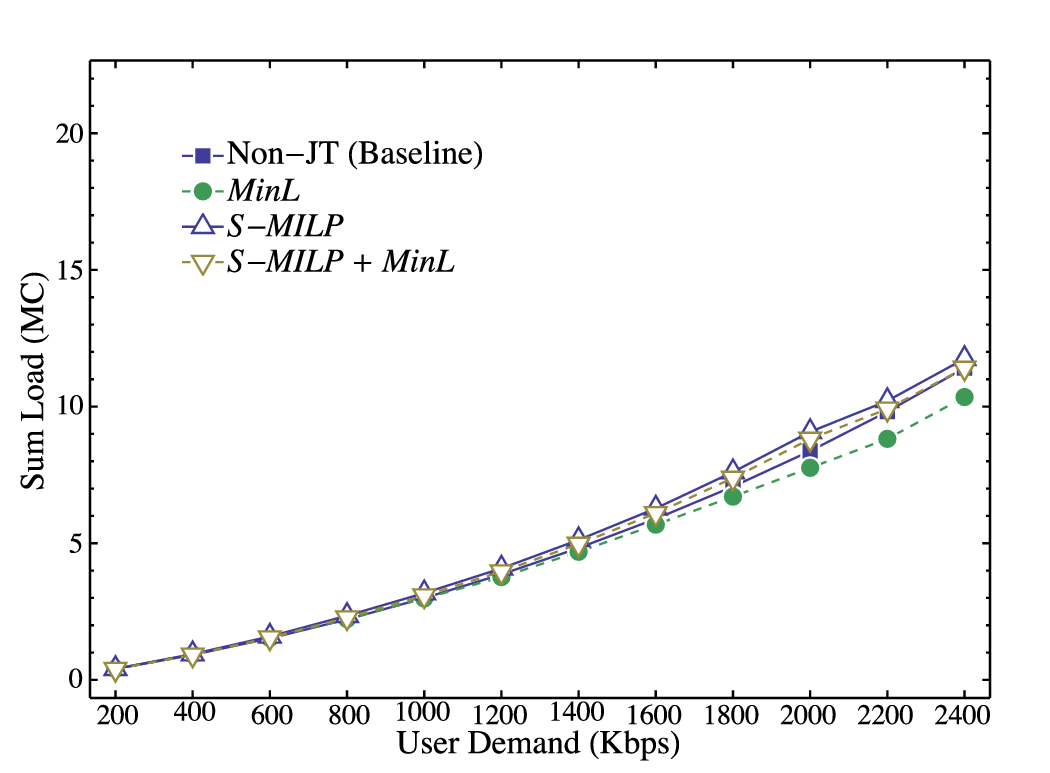}} \\
\subfigure[Sum Load (SC)]{\includegraphics[width=\linewidth]{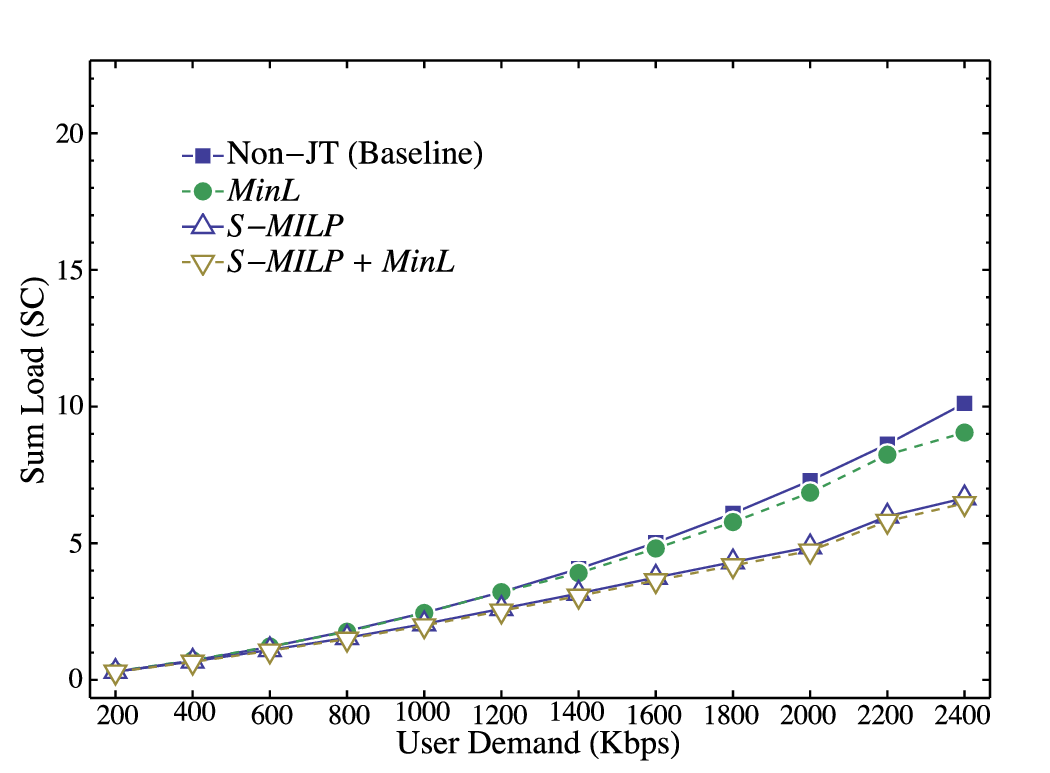}} 
\caption{Performance of sum load vs.\ user demand. 
} 
\label{fig:MinSumL}
\end{figure}

In \figurename~\ref{fig:MinSumL}, we show the performance of the algorithms for
\MSL{}. The main results of \textit{MinSumL} are given in
\figurename~\ref{fig:MinSumL}(a). 
For low UE demand, all cells are lightly loaded in the baseline solution of
Non-JT\@, and, from the lower bound, one concludes that 
the amount of possible improvement is negligible.
The impact of optimization becomes much apparent and very significant with high
user demand.  
For the highest user demand in the figure, 
\textit{MinL} reduces the sum load by $10\%$, compared to the baseline, and the
improvements by using \textit{S--MILP} and \textit{S--MILP} + \textit{MinL}, are
$15\%$ and $17.5\%$, respectively. Note that even for global optimum, the improvement 
can be no more than $18\%$, as shown by the lower bound, which is on the
optimistic side of the global optimum. Thus, from \textit{MinL},
\textit{S--MILP}, to \textit{S--MILP} + \textit{MinL}, the performance
progressively improve, and the latter is near optimal. 
We remark that the observation applies also if the results are averaged over the
demand range.

To gain further insights of the results in \figurename~\ref{fig:MinSumL}(a), we show
the sum load of MCs and SCs separately in \figurename~\ref{fig:MinSumL}(b) and
(c), respectively. These two figures reveal that, in comparison to the baseline
solution, the load of SCs due to optimization is significantly reduced, whereas
the optimized MC load increases slightly.  One can infer that, in order to
minimize the sum load, MCs expand, by using JT to serve some UEs of which the
home cells are SCs. This is because the
load of MC and SC are equally important in the objective function in
question, but the number of SCs are twice of MCs.
Thus, in optimizing the sum load, the MCs tend to have higher load than the
baseline solution. This insight explains why the \textit{S--MILP} performs better
than \textit{MinL} in \figurename~\ref{fig:MinSumL}(a), as the latter is
designed based on the partial optimality conditions for monotonically reducing
cell load.  On the other hand, \textit{MinL} is able to complement
\textit{S--MILP}, as demonstrated by the fact that the combined use of the two
achieves the best performance.

\subsection{Performance Comparison for MinMaxL}
\label{sec:resminmaxl}

\begin{figure}[h!] 
\centering 
\subfigure[Max Load (MC+SC)]{\includegraphics[width=\linewidth]{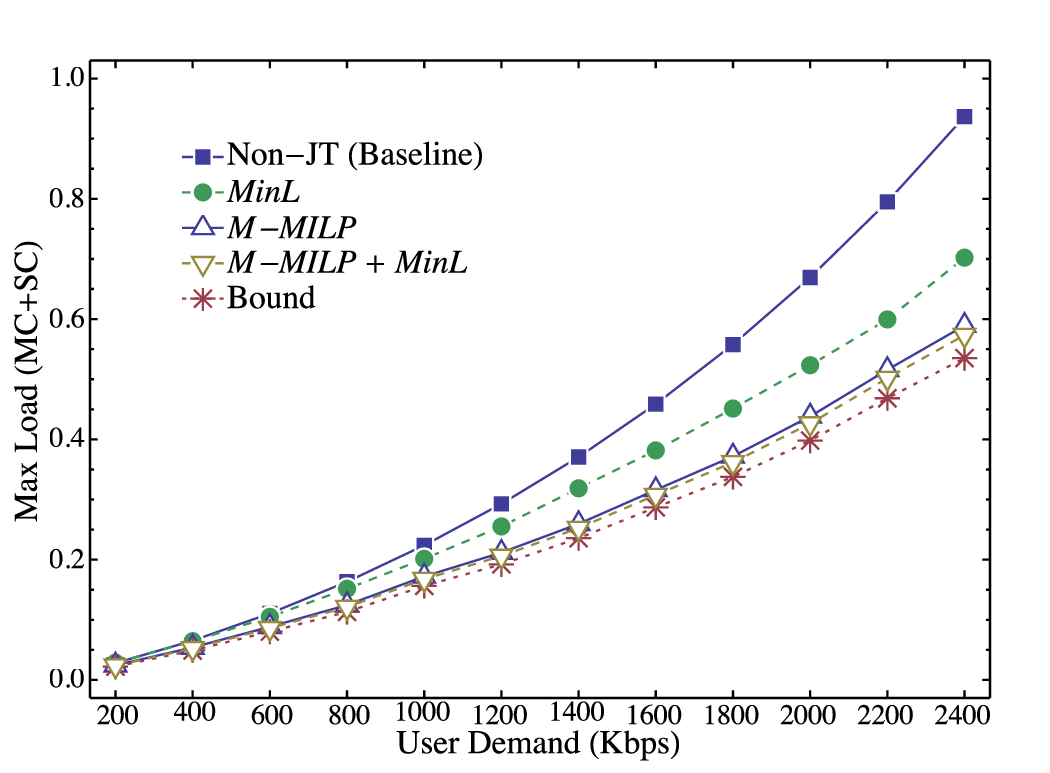}} \\
\subfigure[Max Load (MC)]{\includegraphics[width=\linewidth]{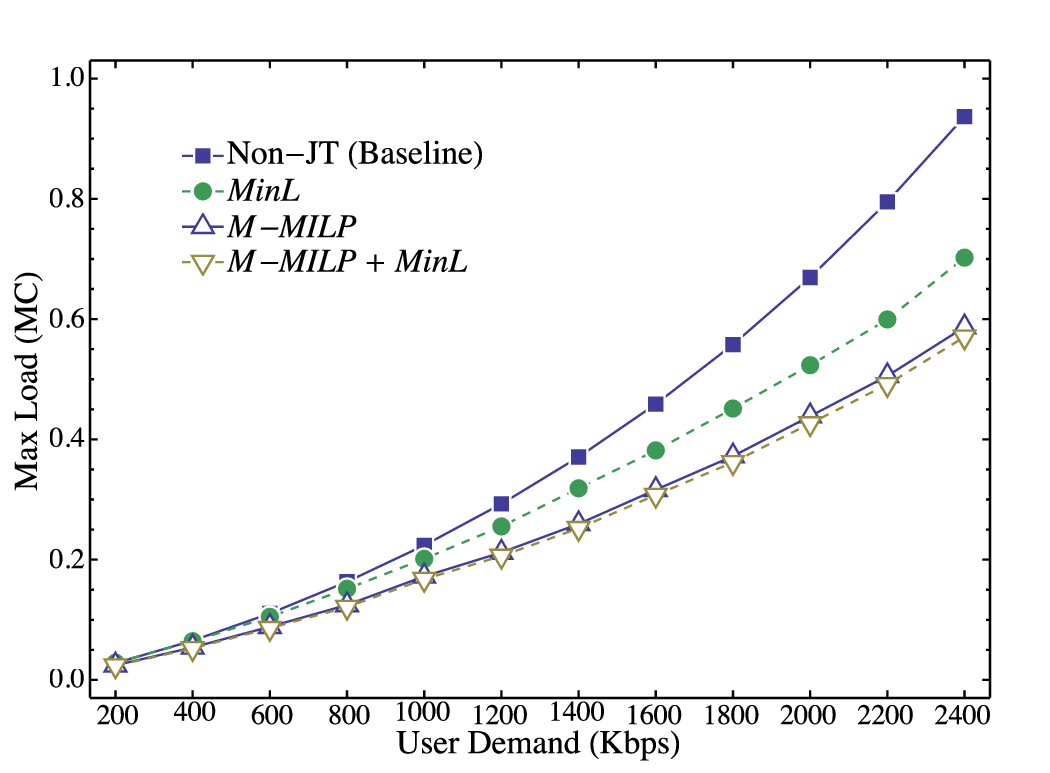}} \\ 
\subfigure[Max Load (SC)]{\includegraphics[width=\linewidth]{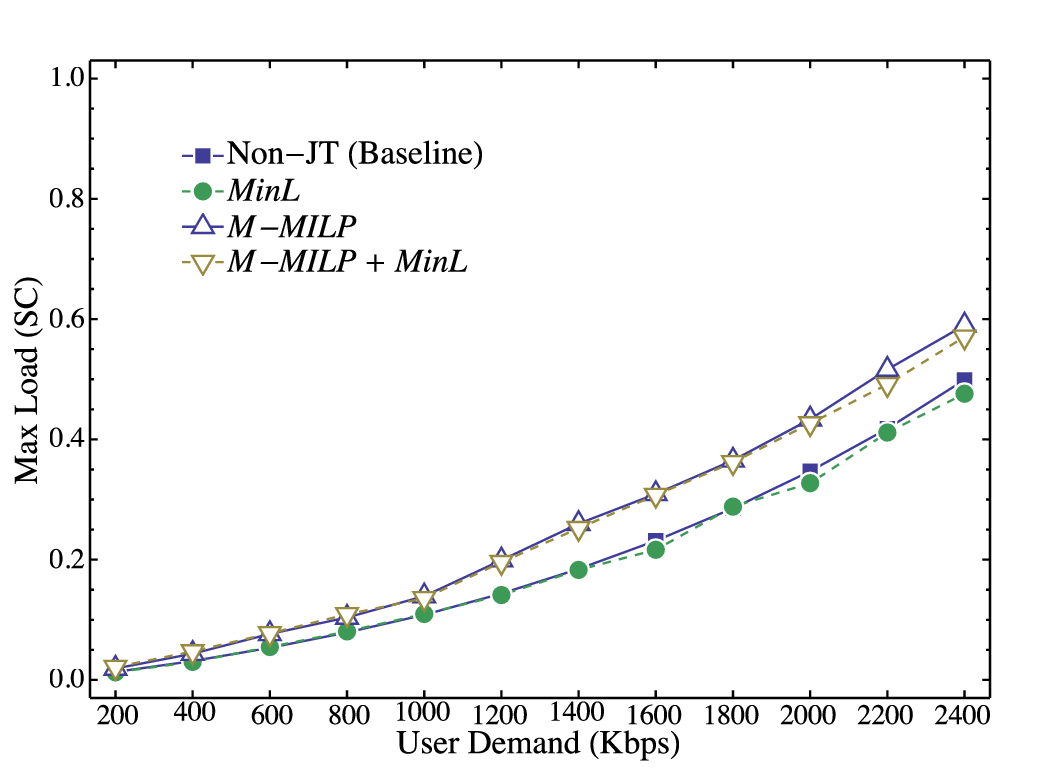}} 
\caption{Performance of max load vs.\ user demand. 
}
\label{fig:MinMaxL} 
\end{figure}

The performance of the algorithms for \MML{} is shown in 
\figurename~\ref{fig:MinMaxL}. The results are structured in three subfigures, 
following the construction of \figurename~\ref{fig:MinSumL}. 
From \figurename~\ref{fig:MinMaxL}(a), for \MML{} the performance gain by the
proposed algorithms is considerably higher than \MSL{}.
For the highest user demand under consideration, 
\textit{MinL} reduces the maximum load by almost $23\%$, compared to the
baseline, and the improvements via \textit{M--MILP} and \textit{M--MILP} +
\textit{MinL} grow to $34\%$ and $37\%$, respectively. Moreover, similar to
\figurename~\ref{fig:MinSumL}(a), \textit{M--MILP} +
\textit{MinL} performs closely to the lower bound, reinforcing the observation
of its near optimality. That is, there is little room left for further
performance improvement.   

We argue that MCs benefit considerably more than SCs from JT in \MML{}.  
By inspecting \figurename~\ref{fig:MinMaxL}(b) and (c), it is apparent that 
algorithm \textit{MinL} reduces the maximum load mainly for the MCs. For the SCs, the 
reduction of maximum load by algorithm \textit{MinL} is almost negligible. From the
results of the two MILP-based solutions, one discovers that the overall maximum
load should be improved at the cost of allowing higher SC load than the baseline.
In HetNets, MCs typically stand for higher load than SCs by the default
association strategy based on received power (i.e., our baseline solution).
Thus reducing the maximum load has to rely on the offloading effect of SCs, and
this effect is achieved via participation of SCs in JT.

The results in \figurename~\ref{fig:MinMaxL} also illustrate the
trade-off between optimality and complexity. Using the low-complexity
\textit{MinL} (Algorithm~\ref{alg:MinL}), the performance improvement amounts to
23\%, whereas the improvement grows to 34\% with \textit{M--MILP}
(Algorithm~\ref{alg:S--MILP}), with the price that the latter has high
complexity due to the use of MILP.

In general, the true optimality gaps of \textit{S--MILP} and
\textit{M--MILP} (or any sub-optimal algorithm) are difficult to
obtain, because doing so requires the value of the global optimum that
is not known. However, by selecting the end points as detailed in
Section~\ref{sec:bounds}, \textit{S--MILP} and \textit{M--MILP} are
also effective bounding schemes for performance evaluation, because
the gap to the optimum cannot exceed the gap in relation to the bound.
From the empirical results in \figurename~\ref{fig:MinSumL}(a) and
\figurename~\ref{fig:MinMaxL}(a), \textit{S--MILP} and \textit{M--MILP} 
perform well in delivering a close-to-optimal solution, because the
average gap to the bounds is small, with an average of merely 3\%.

Recall that an alternative linear approximation for both
\textit{S--MILP} and \textit{M--MILP} is to use the first-order Taylor
series. We have performed additional simulations with this
approximation, in which the derivative is taken at the middle point of
the interference interval considered.  The results show that this
linear approximation has almost identical performance as the linear
approximation discussed in Section~\ref{sec:linear}.

\subsection{Algorithm Scalability}

In \figurename~\ref{fig:large_scale}, we show additional results for
larger-scale scenarios via densification.  Specifically, both the
numbers of UEs and SCs are doubled. For these scenarios, we compare
our scalable algorithm \textit{MinL} with the baseline solution of
Non-JT. For sum load, we consider normalized values with respect to the number of cells, 
i.e., the average load,
because the number of cells in the figure differs by curve. Overall,
the results for both \textit{MinSumL} and
\textit{MinMaxL} show the same trends as those exhibited in
Sections \ref{sec:ressuml} and \ref{sec:resminmaxl}.  There are
however some additional insights.  For average load, the optimization
algorithm yields obvious improvements. When the number of SCs is
doubled, however, the improvement is insignificant. This is because,
with a dense network deployment, a UE is very likely to have very good
link to its home cell, and hence JT does not exhibit much benefit in average.
Interestingly, for reducing maximum load, the results are quite the
opposite: The improvement due to optimization is much more significant
for dense cell deployment. The reason is that additional cells
increase the dimension of freedom in optimizing JT to offload the most
loaded cell, whereas for a sparse network, each cell acts as the home cell
of quite many UEs for which there are no other cell having comparable
link quality and hence there is little room for offloading via JT.
Therefore, our optimization algorithm is useful for both sparse and
dense scenarios for minimizing the average and maximum cell load,
respectively.

\begin{figure}[h!] 
\centering 
\subfigure[Averaged Sum
Load]{\includegraphics[width=\linewidth]{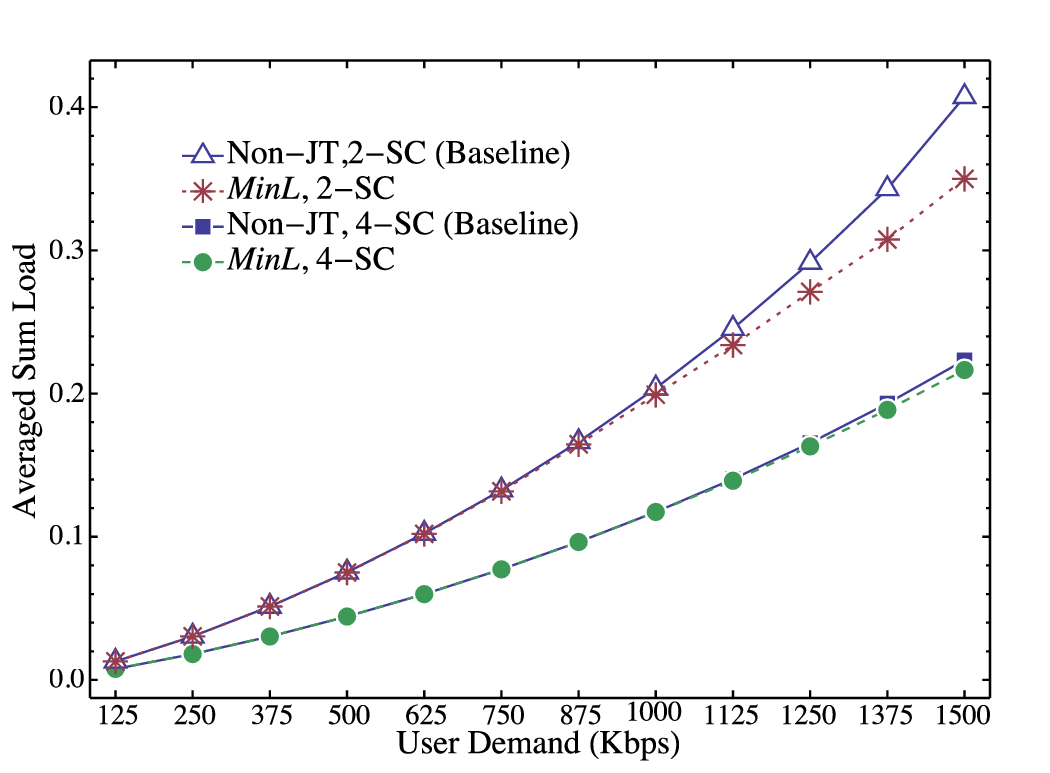}} \\
\subfigure[Maximum Load]{\includegraphics[width=\linewidth]{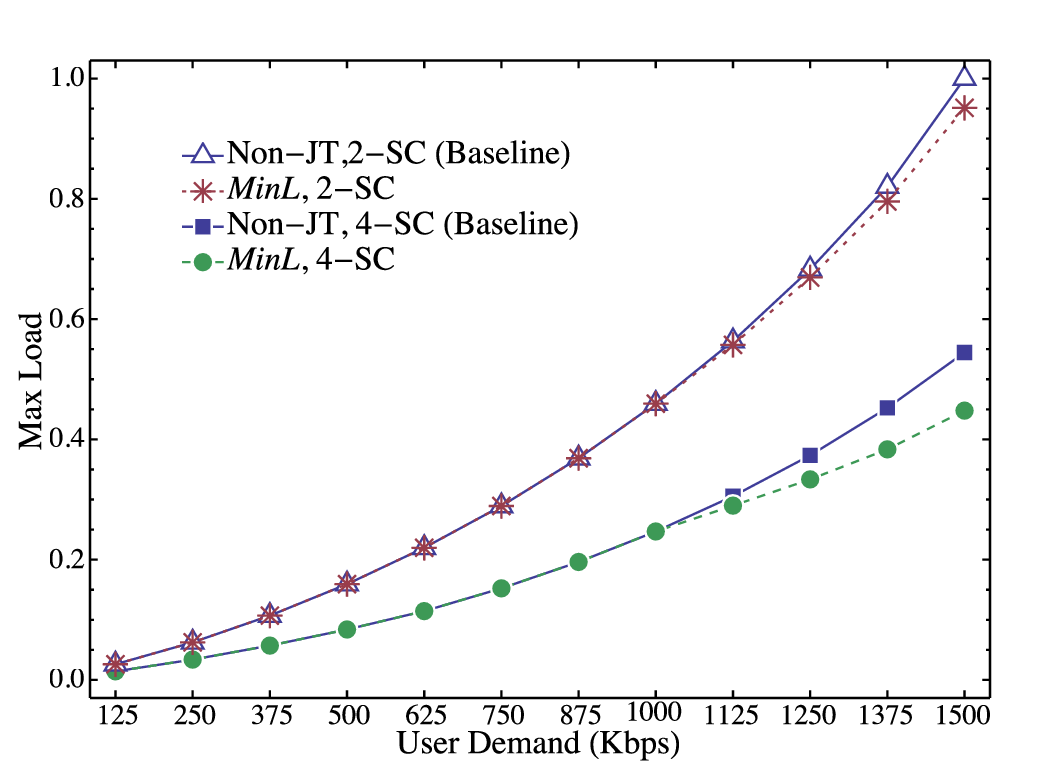}}  
\caption{Numerical results with large network scale (60 UEs, 2 or 4 SCs in each
    hexagonal region).}
\label{fig:large_scale} 
\end{figure}

In Section~\ref{sec:system_model}, we commented on the trade-off
between accuracy and complexity in interference modeling.  In order to
examine the accuracy in the modeling approach, we have performed
post-processing to the solutions obtained for all the network
scenarios and instances, using the more exact interference expression.
The results from post-processing exhibits an average deviation of 
$0.7\%$ in the performance values (cell load levels).
Hence the modeling approach used is indeed
reasonable.

\section*{Acknowledgement}
This work has been supported by the Swedish Research Council and the
Link\"{o}ping-Lund Excellence Center in Information Technology (ELLIIT), Sweden,
and the European Union Marie Curie project MESH-WISE (FP7-PEOPLE-2012-IAPP\@:
324515), DECADE (H2020-MSCA-2014-RISE\@: 645705), and WINDOW
(FP7-MSCA-2012-RISE\@: 318992). The work of D. Yuan has been
carried out within European FP7 Marie Curie IOF project 329313.

\section{Conclusion} 
\label{sec:conclusion}

The paper has investigated how to optimize cell-UE association with JT for
load-coupled LTE HetNets. The cell load refers to the amount of time-frequency
resource for serving user demand, and the cells influence each others' load
levels due to mutual interference. Two optimization problems, \MSL{} and \MML{},
have been formulated and analyzed. Solutions to the two problems strive for the
overall resource efficiency and load balancing among cells, respectively. We
have thoroughly proved the $\mathcal{NP}$-hardness of the two problems. A linearization
scheme has been developed, which yields an MILP-based solution approach to the
two problems. The approach also enables to compute lower bounds of global
optimum. Moreover, several optimality conditions have been derived, leading to a
second solution approach. Simulation results show the benefits of JT for
improving resource efficiency and load balancing, and the improvement is
significant for high user demand. The results also demonstrate that a combined
use of the two approaches performs closely to global optimum.



\end{document}